\author{Augustin Touron}
\title{Consistency of the maximum likelihood estimator in seasonal hidden Markov models}
\newcommand{\R}{\mathbb{R}}
\newtheorem{hyp}{}
\newtheorem{theo}{Theorem}
\newtheorem{lem}{Lemma}
\DeclareMathOperator*{\argmax}{arg\,max}
\begin{document}

\maketitle

\paragraph{Abstract}In this paper, we introduce a variant of hidden Markov models in which the transition probabilities between the states, as well as the emission distributions, are not constant in time but vary in a periodic manner. This class of models, that we will call seasonal hidden Markov models (SHMM) is particularly useful in practice, as many applications involve a seasonal behaviour. However, up to now, there is no theoretical result regarding this kind of model. We show that under mild assumptions, SHMM are identifiable: we can identify the transition matrices and the emission distributions from the joint distribution of the observations on a period, up to state labelling. We also give sufficient conditions for the strong consistency of the maximum likelihood estimator (MLE). These results are applied to simulated data, using the EM algorithm to compute the MLE. Finally, we show how SHMM can be used in real world applications by applying our model to precipitation data, with mixtures of exponential distributions as emission distributions.

\section{Introduction}

Hidden Markov models (HMM) have had various application fields during the last decades: finance \citep{mamon2007}, ecology \citep{patterson2017}, climate modelling \citep{wilks98}, speech recognition \citep{gales2008}, genomics \citep{yoon2009} and many more. Let $\mathsf{X}$ be a finite set and $(\mathsf{Y},\mathcal{Y})$ a measurable space. A \emph{hidden Markov model} (HMM) with state space $\mathsf{X}$ is a $\mathsf{X}\times\mathsf{Y}$-valued stochastic process $(X_t,Y_t)_{t\geq 1}$ where $(X_t)_{t\geq 1}$ is a Markov chain and $(Y_t)_{t\geq 1}$ are $\mathsf{Y}$-valued random variables that are independent conditonnally on $(X_t)_{t\geq 1}$ and such that for all $j\geq 1$, the conditionnal distribution of $Y_j$ given $(X_t)_{t\geq 1}$ only depends on $X_j$. The law of the Markov chain $(X_t)_{t\geq 1}$ is determined by its initial distribution $\pi$ and its transition matrix $\mathbf{Q}$. For all $k\in\mathsf{X}$, the distribution of $Y_1$ given $X_1=k$ is called the \emph{emission distribution} in state $k$. The Markov chain $(X_t)_{t\geq 1}$ is called the \emph{hidden} Markov chain because it is not accessible to observation. The process $(Y_t)_{t\geq 1}$ only is observed. See \citep{rabiner86} for an introduction to HMM and \cite{cappe2009} for a more general formulation. One very common approach to fit such models is to give a parametric form to the emission distributions and to infer the parameters by maximizing the likelihood function. The asymptotic properties of such an estimator have been widely studied. In \cite{baum1966}, the authors proved the consistency and the asymptotic normality of the maximum likelihood estimator (MLE) when the emission distributions have a finite support. Since then, these results have been extended to more general hidden Markov models: see e.g. \cite{douc2004} and references therein.

\paragraph{Motivation} In many applications, especially in climate modeling, simple stationary HMM are not adapted because the data exhibit non-stationarities such as trends and seasonal behaviours. Obviously, temperature has a seasonal component, but this is also the case of precipitations or wind speed for example. It is sometimes possible to preprocess the data in order to obtain a stationary residual. However, finding the right form for the non-stationarity can be very tricky, as well as testing the stationarity of the residual. In the case where the non-stationarity is caused by the presence of a seasonality, a popular solution to avoid this pitfall is to split the period into several sub-periods, and to assume stationarity over each sub-period. For example, if there is an annual cycle, one may consider each month separately and fit twelve different models. However, this is not entirely satisfactory, for several reasons.
\begin{itemize}
\item A choice has to be made for the length of the time blocks.
\item The stationarity assumption over each sub-period may not be satisfied.
\item We have to fit independently several sub-models, which requires a lot of data.
\item The time series used to fit each of the sub-models is obtained by concatenation of data that do not belong to the same year. For exemple, if the time periods are months, the 31st of January of year $n$ will be followed by the first of January of year $n+1$. This is a problem if we use a Markovian model, which exhibits time dependence.
\item If the purpose is simulation, it is preferable to be able to simulate a full year using only one model.
\end{itemize}
Therefore, we prefer to use a extension of HMM that allows seasonality, both in the transition probabilities and in the emission distributions.  This will be refered to as a Seasonal Hidden Markov Model (SHMM). 

\paragraph{Our contribution}
In this paper, we will first detail the mathematical framework of a general SHMM. Although this generalization of hidden Markov models is very useful in practice, as far as we know there exists no theoritical result concerning this class of models. The first question that arises is the identifiability of such models. The identifiability of stationary hidden Markov models is not obvious and has been solved only recently. In \cite{gassiat2016}, it is proved that the transition matrix and the emission distributions of a hidden Markov model are identifiable from the joint distribution of three consecutive observations, provided that the transition matrix is non-singular and that the emission distributions are linearly independent. \cite{alexandrovich2016} proved that the identifiability can be obtained with the weaker assumption that the emission distributions are  distinct. In this paper, we extend the result of \cite{gassiat2016} by proving that the SHMM are identifiable up to state labelling, under similar assumptions. To achieve this, we use a spectral method as described in \citep{hsu2012}. Once we have proved that SHMM are identifiable, it is natural to seek a consistent estimator for their parameters. Regarding HMM, it has been proved by \cite{douc2011} that under weak assumptions, the maximum likelihood estimator in the framework of HMM is strongly consistent. In this paper, we generalize this result to SHMM. This is done by applying the result of Douc et al. to a well chosen hidden Markov model. We then give several examples of specific models for which our consistency result can be applied. The practical computation of the maximum likelihood estimator for HMM, and a fortiori for SHMM, is not straightforward. We address this problem by adapting the classical EM algorithm to our framework. Then we run this algorithm on simulated data to illustrate the convergence of the MLE to the true parameters. Finally, we successfully fit a SHMM to precipitation data and show that our model is able to reproduce the statistical behaviour of precipitation.
 
\paragraph{Outline}In Section \ref{sec2}, we give the general formulation of a SHMM in a parametric framework. Using a spectral method \citep{hsu2012} and a general identifiability result for HMM \citep{gassiat2016}, we show that under weak assumptions, SHMM are identifiable up to state labelling. Then, we generalize the existing results on the convergence of the MLE in HMM to prove our main result, that is the strong consistency of the MLE in SHMM. We also give two examples of models (using mixtures of exponential distributions and mixtures of Gaussian distributions) for which this convergence theorem is applicable. In Section \ref{sec3}, we describe the EM algorithm used for the numerical computation of the maximum likelihood estimator and we illustrate the convergence of the MLE for a simple SHMM using simulated data. Finally, in Section \ref{apprain}, we fit a SHMM to precipitation data and we show that such a model can be used to simulate realistic times series of weather variables.

\section{Consistency result}\label{sec2}

\subsection{Model description}\label{description}

Let $K$ a positive integer and $(X_t)_{t\geq 1}$ a non-homogeneous Markov chain with state space $\mathsf{X}=\{1,\dots,K\}$ and initial distribution $\pi$, that is $\pi_k = \mathbb{P}(X_1=k)$. For $t\geq 1$ and $1\leq i,j\leq K$, let
$$Q_{ij}(t):=\mathbb{P}(X_{t+1}=j\mid X_t = i).$$
$Q(t)$ is the transition matrix from $X_t$ to $X_{t+1}$. Let us assume that $Q(\cdot)$ is a $T$-periodic function, so that there exists some integer $T\geq 1$ such that for all $t\geq 1$, $Q(t+T) = Q(t)$. Let $\mathsf{Y}$ a Polish space, $\mathcal{Y}$ its Borel $\sigma$-algebra, and $(Y_t)_{t\geq 1}$ a $\mathsf{Y}$-valued stochastic process. Assume that conditionally to $(X_t)_{t\geq 1}$, the $(Y_t)_{t\geq 1}$ are independent and that the distribution of $Y_s$ conditionally to the $(X_t)_{t\geq 1}$ only depends on $X_s$ and $s$. We shall denote by $\nu_{k,t}$ the distribution of $Y_t$ given $X_t=k$. We also assume that for all $t\geq 1$ and for all $k\in\mathsf{X}$, $\nu_{k,t+T}=\nu_{k,t}$. Then the process $(X_t,Y_t)_{t\geq 1}$ is called a \emph{seasonal hidden Markov model} (SHMM) and the $\nu_{k,t}$'s are its \emph{emission distributions}. 

\paragraph{}The law of the process $(X_t,Y_t)_{t\geq 1}$ is determined by the distribution $\pi$ of $X_1$, the transition matrices $Q(1),\dots ,Q(T)$ and the emission distributions $\nu_{k,1},\dots, \nu_{k,T}$ for $1\leq k\leq K$.

\paragraph{}Choosing $T=1$, we retrieve the classical hidden Markov model (HMM). A general SHMM includes periodicity in both the transition probabilitites and the emission distributions. However, for some applications, it may be enough to consider periodic transitions and constant emission distributions, or vice versa (see e.g. Section \ref{apprain}).

\paragraph{}The following remark will be the key to the proof of our main result. Given $(X_t,Y_t)_{t\geq 1}$ a seasonal hidden Markov model, two (classical) hidden Markov models naturally arise.
\begin{itemize}
\item For any $t\in\{1,\dots,T\}$ the process $(X_{jT+t},Y_{jT+t})_{j\geq 0}\in\left(\mathsf{X}\times\mathsf{Y}\right)^\mathbb{N}$ is a hidden Markov model with transition matrix $Q(t)Q(t+1)\dots Q(T)Q(1)\dots Q(t-1)$ and emission distributions $(\nu_{k,t})_{k\in\mathsf{X}}$.
\item For $j\geq 0$, let $U_j:=(X_{jT+1},\dots,X_{jT+T})$ and $W_j:=(Y_{jT+1},\dots,Y_{jT+T})$. Then $(U_j,W_j)_{j\geq 0}\in\left(\mathsf{X}^T\times\mathsf{Y}^T\right)^\mathbb{N}$ is a hidden Markov model. For $u=(u_1,\dots,u_T)\in\mathsf{X}^T$, the conditional distribution of $W_0$ given $U_0=u$ is $\bigotimes_{t=1}^T\nu_{u_t,t}$. The transition matrix of the homogeneous Markov chain $(U_j)_{j\geq 0}$ is given by $$\tilde{Q}_{uv}:=\mathbb{P}(U_1=v\mid U_0 = u)=Q_{u_Tv_1}(T)Q_{v_1v_2}(1)\dots Q_{v_{T-1}v_T}(T-1).$$
\end{itemize}

\paragraph{Parametric framework} Assume that there exists a measure $\mu$ defined on $\mathcal{Y}$ such that all the emission distributions are absolutely continuous with respect to $\mu$ and let $f_{k,t}:=\frac{d\nu_{k,t}}{d\mu}$ be the \emph{emission densities}. We consider that all the emission distributions and the transition matrices depend on a parameter $\theta\in\Theta$, where $\Theta$ is a compact subset of some finite-dimensional vector space, e.g. $\mathbb{R}^q$. Let us precise the structure of $\Theta$.
\begin{itemize}
\item The function $t\mapsto Q(t)$ belongs to a (known) parametric family of $T$-periodic functions indexed by a parameter $\beta$ that we wish to estimate.\paragraph{Example}
\begin{equation}\label{eq-Q}
Q_{ij}(t)\propto \exp\left[\sum_{l=0}^d\left(a_{ijl}\cos\left(\frac{2\pi lt}{T}\right)+b_{ijl}\sin\left(\frac{2\pi lt}{T}\right)\right)\right]
\end{equation}
In this example, $\beta=(a_{ijl},b_{ijl})_{1\leq i,j\leq K,0\leq l\leq d}$ and all the transition matrices are entirely determined by $\beta$.
\item For any $t\geq 1$, the emission densities $f_{k,t}$ belong to a (known) parametric family (which does not depend on $t$) indexed by a parameter $\theta^Y(t)$. In addition, we assume that the $T$-periodic function $t\mapsto\theta^Y(t)$ itself belongs to a parametric family indexed by a parameter $\delta$.
\paragraph{Example} 
Denoting by $\mathcal{E}(\alpha)$ the exponential distribution with parameter $\alpha$, 
$$\nu_{k,t}= \mathcal{E}\left[\delta_k\left(1+\cos\left(\frac{2\pi t}{T}\right)\right)\right]$$
In such a case, $\theta^Y(t)=\left(\delta_k\left(1+\cos\left(\frac{2\pi t}{T}\right)\right)\right)_{k\in\mathsf{X}}$ and $\delta = (\delta_1,\dots,\delta_K)\in\mathbb{R}^K$.
\item[$\bullet$] Hence we can define $\theta = (\beta,\delta)$.
\end{itemize}
We shall denote by $\mathbb{P}^{\pi,\theta}$ the law of the process $(Y_t)_{t\geq 1}$ when the parameter is $\theta$ and the distribution of $X_1$ is $\pi$, and by $\mathbb{E}^{\pi,\theta}(\cdot)$ the corresponding expected value.  If the initial distribution $\pi$ is the stationary distribution associated with the transition matrix $Q(1)\cdots Q(T)$, then the two HMM described above are stationary. In such a case, we will simply write $\mathbb{P}^\theta$ and $\mathbb{E}^\theta(\cdot)$ for the law of $(Y_t)_{t\geq 1}$ and the corresponding expected value. Our purpose is to infer $\theta$ from a vector of observations $(Y_1,\dots,Y_n)\in\mathsf{Y}^n$. We assume that the model is well specified, which means that there exists a \emph{true} initial distribution $\pi^*$ and a \emph{true} parameter $\theta^*=(\beta^*,\delta^*)$ in the interior of $\Theta$ such that the observed vector $(Y_1,\dots,Y_n)$ is generated by the SHMM defined by $\pi^*$ and $\theta^*$. We denote by $Q^*(t)$ and $\nu^*_{k,t}$ the corresponding transition matrices and emission distributions. Note that we consider the number $K$ of hidden states to be known.

\subsection{Identifiability}\label{ident}

In \cite{gassiat2016}, the authors prove that the transition matrix and the emission distributions of a stationary HMM are identifiable (up to state labelling) from the law of three consecutive observations, provided that the transition matrix has full rank and that the emission distributions are linearly independent. Still in the context of HMM, the authors of \citet{alexandrovich2016} use the weaker assumption that  the emission distributions are all distinct to obtain identifiability up to state labelling. However, it requires to consider the law of more than three consecutive observations. In this paragraph, we show that under similar assumptions, the SHMM described above is identifiable: we can retrieve the transition matrices and the emission distributions from the law of the process $(Y_t)_{t\geq 1}$ up to permutations of the states. We will use the following assumptions:

\begin{hyp}\label{hyp-inv}
For $1\leq t\leq T$, the transition matrix $Q^*(t)$ is invertible and irreducible.
\end{hyp}
\begin{hyp}\label{hyp-stat}
The matrix $Q^*(1)\cdots Q^*(T)$ is ergodic and its unique stationary distribution $\pi^*$ is the distribution of $X_1$. 
\end{hyp}
\begin{hyp}\label{hyp-ind} 
For $1\leq t\leq T$, the $K$ emission distributions $(\nu^*_{k,t})_{k\in\mathsf{X}}$ are linearly independent.
\end{hyp}

\paragraph{Remark}
These assumptions only involve the data generating process and its corresponding parameter $\theta^*$, they are not about the whole parameter space $\Theta$.

\begin{theo}\label{theo-id}
Assume that the set of true parameters $(\pi^*,Q^*(t),\nu^*_{k,t})_{1\leq t\leq T,k\in\mathsf{X}}$ satisfies Assumptions \ref{hyp-inv}-\ref{hyp-ind}. Let $(\tilde{\pi},\tilde{Q}(t),\tilde{\nu}_{k,t})_{1\leq t\leq T,k\in\mathsf{X}}$ another set of parameters such that the joint distribution of $(Y_1,\dots, Y_T)$ is the same under both sets of parameters. Then there exist $\sigma_1,\dots,\sigma_T$ permutations of $\mathsf{X}$ such that for all $k\in\mathsf{X}$, $\tilde{\pi}_k = \pi_{\sigma_1(k)}$ and for all $t\in\{1,\dots, T\}$, $k,l\in\mathsf{X}$, $\tilde{\nu}_{k,t}=\nu_{\sigma_t(k),t}$ and $\tilde{Q}(t)_{kl} = Q(t)_{\sigma_t(k),\sigma_{t+1}(l)}$, with $\sigma_{T+1}=\sigma_1$.
\end{theo}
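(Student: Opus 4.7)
The plan is to reduce the identifiability of the SHMM to repeated applications of the stationary-HMM identifiability theorem of \cite{gassiat2016}, applied to each of the $T$ subsampled HMMs described in Section~\ref{description}. Recall that for every $t\in\{1,\dots,T\}$, the process $(X_{jT+t},Y_{jT+t})_{j\ge 0}$ is a homogeneous HMM with transition matrix $M^*(t):=Q^*(t)Q^*(t+1)\cdots Q^*(t-1)$ (cyclic product) and emission distributions $(\nu^*_{k,t})_{k\in\mathsf{X}}$. The joint law of the full observation process determines the law of each such subsampled HMM, so it is at the level of these induced HMMs that I would extract identifiability up to a permutation $\sigma_t$ of the states.

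First I would verify that each subsampled HMM satisfies the hypotheses of \cite{gassiat2016}. Invertibility of $M^*(t)$ follows from \ref{hyp-inv} (product of invertible matrices). Linear independence of the emissions at time $t$ is exactly \ref{hyp-ind}. For ergodicity of $M^*(t)$ I would use a similarity argument: $M^*(t)$ is conjugate to $M^*(1)$ via the invertible matrix $Q^*(1)\cdots Q^*(t-1)$, so they share the same spectrum; since \ref{hyp-stat} asserts that the stochastic matrix $M^*(1)$ is ergodic -- equivalently, $1$ is a simple eigenvalue and no other eigenvalue lies on the unit circle -- the same spectral conditions hold for $M^*(t)$, and being itself a stochastic matrix it is ergodic as well. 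Applying \cite{gassiat2016} to the subsampled HMM at time $t$ then yields a permutation $\sigma_t$ of $\mathsf{X}$ such that $\tilde{\nu}_{k,t}=\nu^*_{\sigma_t(k),t}$ for every $k\in\mathsf{X}$.

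To recover the individual transition matrices I would exploit the joint law of two consecutive observations. Writing $\pi^*(t):=\pi^*Q^*(1)\cdots Q^*(t-1)$ for the marginal law of $X_t$, the density of $(Y_t,Y_{t+1})$ with respect to $\mu\otimes\mu$ equals
$$p_{t,t+1}(y,y')=\sum_{k,l\in\mathsf{X}}\pi^*_k(t)\,Q^*(t)_{kl}\,f^*_{k,t}(y)\,f^*_{l,t+1}(y').$$
By \ref{hyp-ind} the families $(f^*_{k,t})_k$ and $(f^*_{l,t+1})_l$ are each linearly independent, hence the coefficients $\pi^*_k(t)\,Q^*(t)_{kl}$ are uniquely determined by $p_{t,t+1}$. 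Since the marginal density of $Y_t$ already identifies $\pi^*_k(t)$ (up to $\sigma_t$), dividing retrieves $Q^*(t)$. Matching coefficients under both parameterisations gives
$$\tilde{\pi}_k(t)\,\tilde{Q}(t)_{kl}=\pi^*_{\sigma_t(k)}(t)\,Q^*(t)_{\sigma_t(k),\sigma_{t+1}(l)},$$
which simplifies to the target identity $\tilde{Q}(t)_{kl}=Q^*(t)_{\sigma_t(k),\sigma_{t+1}(l)}$. Taking $t=1$ yields $\tilde{\pi}_k=\pi^*_{\sigma_1(k)}$, and the wrap-around $\sigma_{T+1}=\sigma_1$ is forced by $T$-periodicity, the subsampled HMM at $t+T$ being identical to the one at $t$.

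The main obstacle I anticipate is the coherent stitching of the permutations across time: the $\sigma_t$'s produced by \cite{gassiat2016} at successive times must be shown to be simultaneously compatible with the bilinear decomposition of $p_{t,t+1}$ and with the cyclic product structure of the $M^*(t)$'s, so that the cascade of per-time relabellings closes up after a full period. A related subtlety is that extracting the law of each subsampled HMM -- needed to feed Gassiat et al.'s three-consecutive-observations criterion with $Y_t,Y_{T+t},Y_{2T+t}$ -- uses information from more than one period, so the hypothesis that ``the joint distribution of $(Y_1,\dots,Y_T)$ is the same'' must be read as equality of the joint laws of the entire observation processes induced by the two parameter sets.
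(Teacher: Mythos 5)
Your proposal takes a genuinely different route from the paper. The paper does not invoke \cite{gassiat2016} as a black box on the subsampled chains; it runs the spectral algorithm of \cite{hsu2012} directly on the triples $(Y_{t-1},Y_t,Y_{t+1})$ of \emph{consecutive} observations, recovering the moment matrices $O_t$ of the emissions (up to a column permutation $\sigma_t$) by an SVD followed by a simultaneous diagonalization, and then extracting $Q^*(t)$ from the cross-moment matrix $N(t)=O_t\,\mathrm{diag}(\pi^*(t))\,Q^*(t)\,O_{t+1}^T$. Your replacement of that last step by the bilinear decomposition of the density of $(Y_t,Y_{t+1})$ is correct and arguably cleaner: linear independence of $(f^*_{k,t})_k$ and of $(f^*_{l,t+1})_l$ makes the tensor family $f^*_{k,t}\otimes f^*_{l,t+1}$ linearly independent, so the coefficients $\pi^*_k(t)Q^*(t)_{kl}$ are determined, $\pi^*_k(t)>0$ by \ref{hyp-inv}--\ref{hyp-stat}, and the coefficient matching automatically ties $\sigma_t$ to $\sigma_{t+1}$ --- so the ``stitching'' problem you flag as the main obstacle is already disposed of by this very step (the same mechanism is what closes the paper's argument).

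The genuine gap is the one you half-acknowledge at the end: as written, your argument does not prove the theorem as stated. Feeding the three-observation criterion of \cite{gassiat2016} to the subsampled HMM at phase $t$ requires the joint law of $(Y_t,Y_{T+t},Y_{2T+t})$, i.e.\ roughly three periods of observations, whereas the hypothesis of Theorem \ref{theo-id} is equality of the laws over a single period $(Y_1,\dots,Y_T)$. Under that hypothesis the laws of the subsampled triples are \emph{not} known to coincide for the two parameter sets --- that is part of what must be proved --- so you cannot ``read'' the hypothesis as equality of the laws of the entire processes without weakening the theorem. (The weakened version would still suffice for Theorem \ref{mainthm}, which only needs $\mathbb{Q}^{\theta}=\mathbb{Q}^{\theta^*}$ to force $\theta=\sigma(\theta^*)$, but it is not the statement at hand; the paper's use of consecutive triples is precisely what keeps the data requirement to one period.) A secondary, repairable looseness: for a stochastic matrix, ``$1$ is a simple eigenvalue and no other eigenvalue lies on the unit circle'' is not equivalent to ergodicity (reducible chains with transient states satisfy it); what the spectral step actually needs is positivity of $\pi^*(t)$, which follows from positivity of $\pi^*$ together with the fact that a product of irreducible stochastic matrices has no zero column.
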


\begin{proof}We shall follow the spectral method as presented in \cite{hsu2012} (see also \cite{DCGL2016} and \cite{DCGLC2017}). Before going through the spectral algorithm, let us first introduce some notations. Let $(\phi_n)_{n\in\mathbb{N}}$ be a sequence of measurable functions such that for any probability measures $\nu_1,\nu_2$ on $(\mathsf{Y},\mathcal{Y})$, 
$$\left(\int_\mathsf{Y}\phi_nd\nu_1\right)_{n\in\mathbb{N}} = \left(\int_\mathsf{Y}\phi_nd\nu_2\right)_{n\in\mathbb{N}}\implies \nu_1=\nu_2.$$ Such a sequence exists because $\mathsf{Y}$ is a Polish space.
For $t\geq 2$ and $N\geq 1$, we shall consider the following matrices:

\begin{itemize}
\item Let $O_t = O_t^{(N)}\in\mathbb{R}^{N\times K}$ be the matrix defined by 
$$(O_t)_{ak}=\mathbb{E}\left[\phi_a(Y_t)\mid X_t = k\right]=\int_Y\phi_ad\nu^*_{k,t}.$$ 
\item Let $L(t)\in\mathbb{R}^N$ be the vector such that, for $1\leq a\leq N$, $L_a(t) = \mathbb{E}\left[\phi_a(Y_t)\right]$.
\item Let $N(t)\in\mathbb{R}^{N\times N}$ be the matrix defined by $N_{ab}(t)=\mathbb{E}\left[\phi_a(Y_t)\phi_b(Y_{t+1})\right]$.
\item Let $P(t)\in\mathbb{R}^{N\times N}$ be the matrix defined by $P_{ac}(t)=\mathbb{E}\left[\phi_a(Y_{t-1})\phi_c(Y_{t+1})\right]$.
\item For $b\in\{1,\dots,N\}$, let $M_t(\cdot,b,\cdot)$ be the matrix defined by $M_t(a,b,c) = \mathbb{E}[\phi_a(Y_{t-1})\phi_b(Y_t)\phi_c(Y_{t+1})]$.
\end{itemize}
%\textbf{Notation } $\mathbb{E}$ \textbf{?}

Notice that all these quantities can be computed from the law of $(Y_t)_{1\leq t\leq T}$, except $O_t$ which requires the emission distributions. Although they all depend on $N$, we do not indicate it, for the sake of readability. Using Assumption \ref{hyp-ind}, we see that there exists an integer $N_0>K$ such that for all $N\geq N_0$, the matrices $O_t^{(N)}$ have full rank. From now on, we will consider that $N\geq N_0$. We denote by $\pi^*(t)$ the (unconditional) distribution of $X_t$. Elementary calculations show that the following equalities hold:

\begin{align}
L(t)&=O_t\pi^*(t)\label{eq1}\\
N(t)&= O_t\mathrm{diag}(\pi^*(t))Q^*(t)O_{t+1}^T\label{eq2}\\
P(t)&= O_{t-1}\mathrm{diag}(\pi^*(t-1))Q^*(t-1)Q^*(t)O_{t+1}^T\label{eq3}\\
M_t(\cdot,b,\cdot) &= O_{t-1}\mathrm{diag}(\pi^*(t-1))Q^*(t-1)\mathrm{diag}[O_t(b,\cdot)]Q^*(t)O_{t+1}^T,\quad 1\leq b\leq N\label{eq4}
\end{align}

where $\mathrm{diag}(v)$ is the diagonal matrix whose diagonal entries are those of the vector $v$. Thanks to Assumptions \ref{hyp-inv}-\ref{hyp-stat}, all the entries of $\pi^*(t)$ are positive, so that $\mathrm{diag}(\pi^*(t))$ is invertible. In addition, Assumption \ref{hyp-inv} and equations \eqref{eq2} and \eqref{eq3} show that the matrices $P(t)$ and $N(t)$ also have rank $K$.

\paragraph{}Let $P(t) = U\Sigma V^T$ be a singular value decomposition (SVD) of $P(t)$: $U$  and $V$ are matrices of size $N\times K$ whose columns are orthonormal families being the left (resp. right) singular vectors of $P(t)$ associated with its $K$ non-zero singular values, and $\Sigma=U^TP(t)V$ is an invertible diagonal matrix of size $K$ containing these singular values. Note that such a decomposition is not unique, as we may choose arbitrarily the order of the diagonal entries of $\Sigma$, which is equivalent to swapping the columns of $U$ and $V$, using the same permutation of $\{1,\dots,K\}$. Let us define, for $1\leq b\leq N$: 
$$B(b):=(U^TP(t)V)^{-1}U^TM_t(\cdot,b,\cdot)V.$$

Hence we have:

\begin{align*}
B(b) &= \left(U^TO_{t-1}\mathrm{diag}(\pi^*(t-1))Q^*(t-1)Q^*(t)O_{t+1}^TV\right)^{-1}U^TO_{t-1}\mathrm{diag}(\pi^*(t-1))Q^*(t-1)\mathrm{diag}[O_t(b,\cdot)]Q^*(t)O_{t+1}^TV\\
&= \left(Q^*(t)O_{t+1}^TV\right)^{-1}\mathrm{diag}[O_t(b,\cdot)]\left(Q^*(t)O_{t+1}^TV\right),
\end{align*}

so that there exists an invertible $K\times K$ matrix $R := \left(Q^*(t)O_{t+1}^TV\right)^{-1}$ such that for all $b\in\{1,\dots,N\}$, 
$$\mathrm{diag}[O_t(b,\cdot)] = R^{-1}B(b)R.$$

Besides, as $O_t$ has rank $K$, there exists $(\alpha_1,\dots,\alpha_N)\in\mathbb{R}^N$ such that the eigenvalues of $B:=\sum_{b=1}^N\alpha_bB(b)$ are distinct. Hence the eigenvalue decomposition of $B$ is unique up to permutation and scaling. As $R$ diagonalizes $B$, we obtain $R$ (up to permutation and scaling) by diagonalizing $B$. Then we can deduce $O_t(b,\cdot)$ for all $b\in\{1,\dots,N\}$, up to a common permutation that we denote by $\sigma_t$. It follows that for all $t\geq 2$, the matrix $O_t$ is computable from $M_t$, $N(t)$ and $P(t)$, up to permutation of its columns, corresponding to the states. Then, since $O_t$ has full rank, we obtain $\pi^*(t)$ from $O_t$ and $L(t)$ thanks to equation \eqref{eq2}. Again, $\pi^*(t)$ is only determined up to permutation of its entries: if $R$ is replaced by $RP_{\sigma_t}$ where $P_{\sigma_t}$ is the matrix of the permutation $\sigma_t$, we get $P_{\sigma_t}^T\pi^*(t)$ instead of $\pi^*(t)$. 
We finally obtain the transition matrix:
\begin{align*}
\left(\tilde{U}^TO_t\mathrm{diag}(\pi^*(t))\right)^{-1}\tilde{U}^TN(t)V\left(O_{t+1}^TV\right)^{-1}=Q^*(t),
\end{align*}
where $\tilde{U}$ is the matrix whose columns are the left singular vectors of $N(t)$.
Replacing $O_t$ by $O_tP_{\sigma_t}$, $\pi^*(t)$ by $P_{\sigma_t}^T\pi^*(t)$ and $O_{t+1}$ by $O_{t+1}P_{\sigma_{t+1}}$ in the last equation, we obtain $P_{\sigma_t}^TQ^*(t)P_{\sigma_{t+1}}$ instead of $Q^*(t)$, which means that $Q^*(t)$ is only determined up to permutations of its lines and columns, those permutations being possibly different. 
\paragraph{} Therefore, we have proved that if the law of $(Y_t)_{t\geq 1}$ is the same under both sets of parameters $(\pi^*,Q^*(t),\nu^*_{k,t})_{1\leq t\leq T,k\in\mathsf{X}}$ and $(\tilde{\pi},\tilde{Q}(t),\tilde{\nu}_{k,t})_{1\leq t\leq T,k\in\mathsf{X}}$ with the first one satisfying Assumptions \ref{hyp-inv}-\ref{hyp-ind}, then there exists $\sigma_1^{(N)},\dots,\sigma_{T}^{(N)},\sigma_{T+1}^{(N)}$, permutations of $\{1,\dots,K\}$ such that for all $k\in\mathsf{X}$, $\tilde{\pi}_k = \pi^*_{\sigma_1^{(N)}(k)}$ and for all $t\in\{1,\dots,T\}$,
$$\tilde{Q}(t)_{kl} = Q^*(t)_{\sigma_t^{(N)}(k),\sigma_{t+1}^{(N)}(l)},\quad \tilde{O}_t^{(N)}\left(\cdot,k\right) = O_t^{(N)}\left(\cdot,\sigma_t^{(N)}(k)\right),\quad k,l\in\mathsf{X}$$
where $\tilde{O}_t$ is the analog of $O_t$ with respect to the distributions $\tilde{\nu}_{k,t}$. The relationship between $\tilde{Q}(t)$ and $Q^*(t)$ shows that for all $N\geq N_0$ and for all $t\in\{1,\dots,T\}$, $\sigma^{(N)}_t = \sigma^{(N_0)}_t$. Hence we denote by $\sigma_t$ this permutation. Thus, for all $N\geq N_0$, $\tilde{O}_t^{(N)} = O_t^{(N)}P_{\sigma_t}$. This implies that for all $t\in\{1,\dots,T\}$ and for all $k\in\mathsf{X}$, $\tilde{\nu}_{k,t} = \nu^*_{\sigma_t(k),t}$ and the theorem is proved.

\end{proof}

\paragraph{Remarks}
\begin{itemize}
\item In Theorem \ref{theo-id}, we need not assume that the second set of parameters $(\tilde{\pi},\tilde{Q}(t),\tilde{\nu}_{k,t})_{1\leq t\leq T,k\in\mathsf{X}}$ satisfies Assumptions \ref{hyp-inv}-\ref{hyp-ind} because it has to be the case. Indeed, as the two sets of parameters induce the same distribution of $(Y_1,\dots,Y_T)$, we have, using the notations of the above proof, 
$$N(t) = O_t\mathrm{diag}(\pi(t))Q(t)O_{t+1}^T = \tilde{O}_t\mathrm{diag}(\tilde{\pi}(t))\tilde{Q}(t)\tilde{O}_{t+1}^T$$
Therefore, if the second set of parameters does not satisfy Assumptions \ref{hyp-inv}-\ref{hyp-ind}, there exists some $t$ such that $\tilde{O}_t\mathrm{diag}(\tilde{\pi}(t))\tilde{Q}(t)\tilde{O}_{t+1}^T$ has not full rank. This is a contradiction since $O_t\mathrm{diag}(\pi(t))Q(t)O_{t+1}^T$ has full rank.
\item We proved that we could identify the emission distributions and the transition matrices up to permutations, these permutations depending on the time step. However it is not possible to prove that there exist a single permutation of the states that is common to all the times steps. Indeed, if we choose another labelling at time $t$, the matrix $O_t$ is replaced by $O_tP_\sigma$ and the matrix $Q(t)$ is replaced by $Q(t)P_\sigma$, with $P_\sigma$ a permutation matrix. Then we see, using \eqref{eq4}, that the matrices $M(\cdot,b,\cdot)$ remain unchanged, which means that the permutation $\sigma$ cannot be identified from the distribution of $(Y_{t-1},Y_t,Y_{t+1})$.
\item The spectral method presented above provides a non-parametric moment estimator for the parameters of a HMM. The properties of such an estimator are studied in \cite{DCGLC2017}.
\item The identifiability of the parameters $\beta$ and $\delta$ depends on the parametric form chosen for $t\mapsto Q(t)$ and $t\mapsto\theta^Y(t)$. Hence this should be studied separately for each particular version of the SHMM.

\end{itemize}

\subsection{Consistency}

In this paragraph, we prove our main result, that is the strong consistency of the maximum likelihood estimator for SHMM. Assume that we have observed $(Y_1,\dots,Y_n)$ (recall that $X_1,\dots,X_n$ are not observed). For a probability distribution $\pi$ on $\mathsf{X}$ and $\theta\in\Theta$, let $L_{n,\pi}\left[\theta;(Y_1,\dots,Y_n)\right]$ be the likelihood function when the parameter is $\theta$ and the distribution of $X_1$ is $\pi$. We define the maximum likelihood estimator by

$$\hat{\theta}_{n,\pi}:=\argmax_{\theta\in\Theta}L_{n,\pi}\left[\theta;(Y_1,\dots,Y_n)\right].$$

We will need the following assumptions:

\begin{hyp}\label{hyp-ident}
The parameter $\beta$ can be identified from the transition matrices $Q(1),\dots, Q(T)$ and the parameter $\delta$ can be identified from the emission distributions $\nu_{k,t}$.
\end{hyp}

\begin{hyp}\label{hyp-alpha}
$$\alpha:=\inf_{\theta\in\Theta}\inf_{1\leq t\leq T}\inf_{i,j\in\mathsf{X}}Q_{ij}(t)>0$$
\end{hyp}

\begin{hyp}\label{hyp-continue}
The transition probabilities (resp. the emission densities) are continuous functions of $\beta$ (resp. $\delta$).
\end{hyp}

\begin{hyp}\label{hyp-delta}
For all $y\in\mathsf{Y}$, $k\in\mathsf{X}$ and $t\in\{1,\dots,T\}$,
$$\inf_{\theta\in\Theta}f_{k,t}^\theta(y)>0,\quad \sup_{\theta\in\Theta}f_{k,t}^\theta(y)<\infty$$
\end{hyp}

\begin{hyp}\label{hyp-ct}
For $t\in\{1,\dots,T\}$ and $y\in\mathsf{Y}$, we define
$$c_t(y):=\inf_{\theta\in\Theta}\sum_{k\in\mathsf{X}}f_{k,t}^{\theta}(y),\quad d_t(y):=\sup_{\theta\in\Theta}\sum_{k\in\mathsf{X}}f_{k,t}^{\theta}(y),$$ and we assume that
$$\mathbb{E}_{\nu^{\theta^*}_{k,t}}\left[-\log c_t(Y)\right]<\infty,\quad \mathbb{E}_{\nu^{\theta^*}_{k,t}}\left[\log d_t(Y)\right]<\infty$$
\end{hyp}

Let $\mathfrak{S}_K$ be the set of permutations of $\{1,\dots,K\}$. For $\sigma = (\sigma_1,\dots,\sigma_T)\in\left(\mathfrak{S}_K\right)^T$ and $\theta\in\Theta$, let us denote by $\sigma(\theta)$ the parameter obtained from $\theta$ by swapping the states according to the permutations $\sigma_1,\dots,\sigma_T$. More precisely, we compute the transition matrices and the emission distributions corresponding to the parameter $\theta$, we swap them using the permutations $\sigma_1,\dots,\sigma_T$, and using \ref{hyp-ident}, we identify the parameter corresponding to the swapped matrices and emission distributions. This parameter is denoted by $\sigma(\theta)$. It follows from Theorem \ref{theo-id} that under Assumptions \ref{hyp-inv} to \ref{hyp-ident},
$$\Theta^*:=\{\theta\in\Theta : \mathbb{P}^\theta = \mathbb{P}^{\theta^*}\} \subset \{\sigma(\theta^*):\sigma\in \left(\mathfrak{S}_K\right)^T\}.$$

Note that due to the parametric form of the transition matrices and the emission distributions, the set $\{\sigma\in\left(\mathfrak{S}_K\right)^T : \mathbb{P}^{\theta^*} = \mathbb{P}^{\sigma(\theta^*)}\}$ may actually be much smaller than $\left(\mathfrak{S}_K\right)^T$. However, it contains at least $\{\sigma\in\left(\mathfrak{S}_K\right)^T : \sigma_1=\dots=\sigma_T\}$.

\begin{theo}\label{mainthm}
Under Assumptions \ref{hyp-inv} to \ref{hyp-ct}, for any initial distribution $\pi$ and $\mathbb{P}^{\theta^*}$-a.s., there exists $\left(\sigma^{(n)}\right)_{n\in\mathbb{N}}$ a $(\mathfrak{S}_K)^T$-valued sequence such that
$$\lim\limits_{n\to\infty}\sigma^{(n)}\left(\hat{\theta}_{n,\pi}\right)=\theta^*.$$
\end{theo}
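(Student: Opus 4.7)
The strategy is to reduce the SHMM to a classical time-homogeneous HMM via the block construction already described in Section \ref{description}: the process $(U_j, W_j)_{j\geq 0}$ with $U_j = (X_{jT+1}, \ldots, X_{jT+T})$ and $W_j = (Y_{jT+1}, \ldots, Y_{jT+T})$ is an HMM on $\mathsf{X}^T \times \mathsf{Y}^T$, with transition matrix $\tilde{Q}$ and product block emission density $\tilde{f}^\theta_u(w) = \prod_{t=1}^T f^\theta_{u_t,t}(w_t)$ relative to $\mu^{\otimes T}$. The plan is to apply the consistency theorem of \cite{douc2011} to this auxiliary HMM and translate the conclusion back to the SHMM parameter $\theta$.

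First I would verify the hypotheses of \cite{douc2011} for the block HMM. Assumption \ref{hyp-alpha} yields the uniform lower bound $\tilde{Q}_{uv} \geq \alpha^T > 0$, which provides the strong minorization condition required there. Continuity of $\theta \mapsto \tilde{Q}_{uv}$ and $\theta \mapsto \tilde{f}^\theta_u(w)$ follows from \ref{hyp-continue}. The block emission density $\tilde{f}^\theta_u$ inherits positivity and boundedness from \ref{hyp-delta}, and the integrability of the extremal log-densities is obtained by setting $\tilde{c}(w) := \prod_{t=1}^T c_t(w_t)$ and $\tilde{d}(w) := \prod_{t=1}^T d_t(w_t)$, then invoking \ref{hyp-ct} coordinate by coordinate. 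Under \ref{hyp-stat}, the block HMM is stationary, so the result of \cite{douc2011} applies directly.

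The theorem of \cite{douc2011} then yields, for any initial distribution on $\mathsf{X}^T$ and $\mathbb{P}^{\theta^*}$-a.s., convergence of the block MLE $\hat\theta_J^{\mathrm{bl}}$ to the equivalence class $\Theta^* = \{\theta \in \Theta : \mathbb{P}^\theta = \mathbb{P}^{\theta^*}\}$. By Theorem \ref{theo-id} combined with \ref{hyp-ident}, $\Theta^* \subset \{\sigma(\theta^*) : \sigma \in (\mathfrak{S}_K)^T\}$, a finite set. Since $\Theta$ is compact and $\Theta^*$ finite, one can $\mathbb{P}^{\theta^*}$-a.s. extract a $(\mathfrak{S}_K)^T$-valued sequence $\sigma^{(J)}$ such that $\sigma^{(J)}(\hat\theta_J^{\mathrm{bl}}) \to \theta^*$.

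Finally I would relate $\hat\theta_{n,\pi}$ to $\hat\theta_J^{\mathrm{bl}}$. Writing $n = JT + r$ with $0 \leq r < T$, the SHMM log-likelihood of $(Y_1,\ldots,Y_n)$ under $\pi$ and the block log-likelihood of $(W_0,\ldots,W_{J-1})$ under the induced block initial distribution differ only by the contribution of the $r$ trailing observations and by a bounded boundary term coming from the choice of initial distribution on $\mathsf{X}^T$; both contributions are uniformly bounded in $\theta$ by \ref{hyp-alpha} and \ref{hyp-delta}, so they are $O(1)$ as $J \to \infty$ and cannot affect the a.s.\ limit of the argmax. The main obstacle is the careful verification of the assumptions of \cite{douc2011} for the block HMM in spite of the degenerate form of $\tilde{Q}_{uv}$ (which depends on $u$ only through $u_T$), and the identification of the Kullback--Leibler contrast minimizers with $\Theta^*$; the former is handled by the uniform lower bound $\alpha^T$, while the latter ultimately rests on Theorem \ref{theo-id} together with \ref{hyp-ident}.
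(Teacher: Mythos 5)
Your proposal follows essentially the same route as the paper: reduce to the block HMM $(U_j,W_j)$, verify the minorization, continuity, positivity/boundedness and log-integrability conditions of the general HMM consistency theorem (with the same bounds $\tilde Q_{uv}\geq \alpha^T$ and $b_+(w)\leq \prod_t d_t(w_t)$), and then identify the limit set via Theorem \ref{theo-id} and \ref{hyp-ident}. The only (cosmetic) divergence is in handling $n$ not a multiple of $T$: you absorb the trailing observations as an a.s.\ negligible perturbation of the log-likelihood, whereas the paper reruns the block argument for each shifted estimator $\tilde\theta^s_{J,\pi}$; both work.
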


\begin{proof}
Let us consider $(U_j,W_j)_{j\geq 0}$ the HMM defined in paragraph \ref{description}. Recall that its transition matrix is given by

$$\tilde{Q}_{uv}=Q_{u_Tv_1}(T)Q_{v_1v_2}(1)\dots Q_{v_{T-1}v_T}(T-1)$$

and its emission densities are

$$g^\theta(w\mid u ) = \prod_{t=1}^Tf^{\theta^Y}_{u_t,t}(w_t).$$

Thus the law of the process $(U_j,W_j)_{j\geq 0}$ is entirely determined by $\theta$ and $\pi$ and it is stationary under Assumption \ref{hyp-stat}. Denoting by $\mathbb{Q}^{\pi,\theta}$ the law of the process $(W_j)_{j\geq 0}$ when the parameter is $\theta$ and the distribution of $X_1$ is $\pi$, we notice that for any $\theta_1,\theta_2\in\Theta$,
$$\mathbb{Q}^{\pi,\theta_1}=\mathbb{Q}^{\pi,\theta_2}\implies \mathbb{P}^{\pi,\theta_1}=\mathbb{P}^{\pi,\theta_2}.$$

Therefore, using Theorem \ref{theo-id} and Assumption \ref{hyp-ident}, we have

\begin{equation}\label{ident2}
\mathbb{Q}^{\theta}=\mathbb{Q}^{\theta^*}\implies\exists\ \sigma\in\left(\mathfrak{S}_K\right)^T,\, \theta=\sigma(\theta^*).
\end{equation}

We notice that for all $\theta\in\Theta$, $J\geq 0$ and initial distribution $\pi$, we have :

$$\tilde{L}_{J,\pi}[\theta;(W_0,\dots,W_J)] = L_{(J+1)T,\pi}[\theta;(Y_1,\dots,Y_{(J+1)T})],$$

where $\tilde{L}_{J,\pi}$ is the likelihood function corresponding to the model $(U_j,W_j)_{j\geq 0}$ when the distribution of $X_1$ is $\pi$. Let $\tilde{\theta}_{J,\pi}$ be a maximizer of $\tilde{L}_{J,\pi}$. If we are able to prove the strong consistency of $\tilde{\theta}_{J,\pi}$, by the same arguments, for all $s\in\{0,\dots,T-1\}$, we can prove that the estimator $\tilde{\theta}^s_{J,\pi}:=\argmax_{\theta\in\Theta}{L}_{(J+1)T+s,\pi}[\theta;Y_1,\dots,Y_{(T+1)J+s}]$ is strongly consistent. From there we easily deduce that $\hat{\theta}_{n,\pi}$ is strongly consistent. Therefore it is sufficient to prove the strong consistency of the maximum likelihood estimator for the HMM $(U_j,W_j)_{j\geq 0}$. To this end, we shall use the consistency result stated in Theorem 13.14 in \cite{douc2014}. The following properties must hold in order to apply this theorem to the HMM $(U_j,W_j)_{j\geq 0}$:

\begin{enumerate}
\item $$\tilde{\alpha}:=\inf_{\theta\in\Theta}\inf_{u,v\in\mathsf{X}^T}\tilde{Q}^{\theta}_{uv}>0$$
\item For any $u,v\in\mathsf{X}^T$ and $w\in\mathsf{Y}^T$, the functions $\theta\mapsto\tilde{Q}^{\theta}(u,v)$ and $\theta\mapsto g^\theta(w\mid u)$ are continuous.
\item For all $w\in\mathsf{Y}^T$,
$$b_-(w):=\inf_{\theta\in\Theta}\sum_{u\in\mathsf{X}^T}g^\theta(w\mid u)>0,\quad b_+(w):=\sup_{\theta\in\Theta}\sum_{u\in\mathsf{X}^T}g^\theta(w\mid u)<\infty$$
\item $$\mathbb{E}^{\theta^*}\left[|\log b_+(W_0)|\right]<\infty,\quad \mathbb{E}^{\theta^*}\left[|\log b_-(W_0)|\right]<\infty$$
\\
\end{enumerate}

The three first properties are straightforward consequences of Assumptions \ref{hyp-alpha} to \ref{hyp-delta}. Let us prove that $\mathbb{E}^{\theta^*}\left[|\log b_+(W_0)|\right]<\infty$. The proof that $\mathbb{E}^{\theta^*}\left[|\log b_-(W_0)|\right]<\infty$ follows the same lines.
We have: 
$$\mathbb{E}^{\theta^*}[ | \log b_+(W_0) |]=\sum_{u\in\mathsf{X}^T}\tilde{\pi}^{\theta^*}(u)\int g^{\theta^*}(w\mid u)|\log b_+(w)|\mu^{\otimes T}(dw),$$
where $\tilde{\pi}^{\theta^*}$ is the stationary distribution associated with $\tilde{Q}^{\theta^*}$. Hence it is enough to prove that for all $u\in\mathsf{X}^T$,
$$\int g^{\theta^*}(w\mid u)|\log b_+(w)|\mu^{\otimes T}(dw)<\infty$$
We have: 
$$\int g^{\theta^*}(w\mid u)|\log b_+(w)|\mu^{\otimes T}(dw) = \int_{b_+>1} g^{\theta^*}(w\mid u)\log b_+(w)\mu^{\otimes T}(dw) + \int_{b_+<1} g^{\theta^*}(w\mid u)(-\log b_+(w))\mu^{\otimes T}(dw)$$
Using Assumption \ref{hyp-delta}, we get $\inf_{w\in\mathsf{Y}^T}b_+(w)>0$. Therefore, as $g^{\theta^*}$ is a probability density function, the second term is finite. In order to show that the first one is finite, it is enough to find a function $C$ such that for all $w\in\{b_+>1\}$, $C(w)\geq b_+(w)$ and $\int_{b_+>1} g^{\theta^*}(w\mid u)\log C(w)\mu^{\otimes T}(dw)<\infty$. Let $C(w)=\prod_{t=1}^Td_t(w_t)$.
For all $w\in\mathcal{Y}^T$, we get:
$$C(w)=\prod_{t=1}^T\sup_{\theta\in\Theta}\sum_{u_t=1}^Kf_{u_t,t}^{\theta^Y}(w_t)\geq\sup_{\theta\in\Theta}\prod_{t=1}^T\sum_{u_t=1}^Kf_{u_t,t}^{\theta^Y}(w_t)=\sup_{\theta\in\Theta}\sum_{u\in\mathsf{X}^T}\prod_{t=1}^Tf_{u_t,t}^{\theta^Y}(w_t)=\sup_{\theta\in\Theta}\sum_{u\in\mathsf{X}^T}g^\theta(w\mid u)= b_+(w)$$
In addition:
\begin{align*}
\int_{b_+>1} g^{\theta^*}(w\mid u)\log C(w)\mu^{\otimes T}(dw)&\leq\int_{\mathsf{Y}^T} g^{\theta^*}(w\mid u)\left(\sum_{t=1}^T\log d_t(w_t)\right)\mu^{\otimes T}(dw)\\
&= \sum_{t=1}^T\int_{\mathsf{Y}^T}\left(\prod_{s=1}^Tf_{u_s,s}^{\theta^*_Y}(w_s)\right)\log d_t(w_t)\mu^{\otimes T}(dw)\\
\mathrm{(Fubini)}\rightarrow&=\sum_{t=1}^T\int_{\mathsf{Y}}f_{u_t,t}^{\theta^*_Y}(w_t)\log d_t(w_t)\mu(dw_t)\\
&=\sum_{t=1}^T\mathbb{E}_{\nu_{u_t,t}^{\theta^*}}\left[\log d_t(Y)\right]\overset{\ref{hyp-ct}}{<}+\infty.
\end{align*}

Hence $\mathbb{E}^{\theta^*}\left[|\log b_+(W_0)|\right]<\infty$. Thus we can apply Theorem 13.14 in \cite{douc2014} to get that for any initial distribution $\pi$, $\mathbb{P}^{\theta^*}$-a.s., 
$$\lim\limits_{n\to\infty}d\left(\hat{\theta}_{n,\pi},\left\lbrace \theta\in\Theta:\mathbb{Q}^{\theta} = \mathbb{Q}^{\theta^*}\right\rbrace\right)=0,$$
where $d$ is a distance on $\Theta$.
 Combining this theorem with \eqref{ident2}, we obtain the strong consistency of $\tilde{\theta}_{J,\pi}$ and then Theorem \ref{mainthm} is proved.
\end{proof}
\paragraph{Remarks}
\begin{itemize}
\item The strong consistency of the MLE does not depend on the choice of the initial distribution in the computation of the likelihood function. This is due to the forgetting property of HMM under Assumptions \ref{hyp-alpha} and \ref{hyp-delta}.
\item Here we just remind the outline of the proof of Theorem 13.14 in \cite{douc2014}. Note that in the original proof, the authors use a stronger assumption than us. They assume that $\sup_{w,u}\sup_\theta g^{\theta}(w\mid u)<\infty$. However the proof is still valid with our slightly weaker assumption. Let $p^{\theta,\pi}(W_t\mid W_{t-1},\dots W_{t-m})$ the conditional probability density function of $W_t$ given $(W_{t-1},\dots,W_{t-m})$ when the parameter is $\theta$ and the initial distribution is $\pi$. The first step of the proof consists in showing that almost surely, for any $\theta\in\Theta$,
$$\lim\limits_{m\to\infty}p^{\theta,\pi}(W_t\mid W_{t-1},\dots W_{t-m})=:\Delta_{t,\infty}(\theta)$$
is well defined and does not depend on the initial distribution $\pi$. This result relies on the geometric \emph{forgetting} rate of the initial distribution by the HMM. This is where \ref{hyp-alpha} is crucial. Then we prove that $\left(\Delta_{t,\infty}(\theta)\right)_t$ is an ergodic process and that for any $\theta\in\Theta$ and any initial distribution $\pi$,
$$\frac{1}{J}\tilde{L}_{J,\pi}[\theta;(W_0,\dots,W_J)]\underset{J\to\infty}{\longrightarrow}\mathbb{E}^{\theta^*}\left[\Delta_{0,\infty}(\theta)\right]=:\ell(\theta).$$ In the last step of the proof, we show that the function $\ell$ is continuous, that for any $\theta\in\Theta$, $\ell(\theta)\leq\ell(\theta^*)$, and $\ell(\theta)=\ell(\theta^*)\implies\mathbb{Q}^{\theta}=\mathbb{Q}^{\theta^*}$.
\end{itemize}

\subsection{Applications}

In this section, we introduce two examples of SHMM and we show that under weak assumptions, Theorem \ref{mainthm} can be applied. We have seen in section \ref{ident} that one of the conditions to obtain identifiability in SHMM is the linear independence of the emission distributions. To obtain this, we will use the following lemma.

\begin{lem}\label{lem-ind}
\begin{enumerate}[1.]
\item Let $\lambda_1,\dots,\lambda_n$ be pairwise distinct positive numbers and let us denote by $\mathcal{E}(\lambda)$ the exponential distribution with parameter $\lambda$. Then the distributions $\mathcal{E}(\lambda_1),\dots,\mathcal{E}(\lambda_n)$ are linearly independent.
\item Let $m_1,\dots,m_n$ be real numbers and $\sigma_1^2,\dots,\sigma_n^2$ be pairwise distinct positive numbers. For $k\in\{1,\dots,n\}$, let us denote by $\mu_k$ the Gaussian distribution with mean $m_k$ and variance $\sigma_k^2$. Then the distributions $\mu_1,\dots,\mu_n$ are linearly independent.
\end{enumerate}
\begin{proof}
Without loss of generality, we can assume that $\lambda_1<\lambda_2<\dots <\lambda_n$. Let $(a_1,\dots,a_n)\in\mathbb{R}^n$ such that
$$a_1\mathcal{E}(\lambda_1)+\cdots + a_n\mathcal{E}(\lambda_n)=0.$$
For $x>0$, we apply this inequality to the Borel set $(-\infty,x]$ and we take the derivative with respect to $x$. We get that for any $x>0$,
\begin{equation}\label{eqlibre}
a_1\lambda_1 e^{-\lambda_1 x}+\cdots+a_n\lambda_ne^{-\lambda_n x}=0.
\end{equation}
This implies that for any $x>0$,
$$a_1\lambda_1+a_2\lambda_2e^{-(\lambda_2-\lambda_1)x}+\cdots +a_n\lambda_ne^{-(\lambda_n-\lambda_1)x}=0$$
Hence, letting $x$ go to infinity, we obtain that $a_1=0$, and equation \eqref{eqlibre} reduces to
$$a_2\lambda_2 e^{-\lambda_2 x}+\cdots+a_n\lambda_ne^{-\lambda_n x}=0.$$
Thus, step by step, we show that $a_1=\dots = a_n = 0$, which ends the proof of the first statement. The second one can be proved using the same arguments.
\end{proof}
\end{lem}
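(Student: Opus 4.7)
The plan is to prove both statements by reducing a linear dependence relation among probability distributions to the corresponding linear relation among their densities (with respect to Lebesgue measure), and then to peel off the coefficients one by one using the asymptotic behavior of the densities at $+\infty$. The underlying principle in both cases is the same: among densities of the same parametric family with distinct parameters, there is a natural ordering by rate of decay, and the slowest-decaying term in a linear combination can be isolated by an appropriate limit.

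For the exponential case, I will suppose $\sum_{k=1}^n a_k \mathcal{E}(\lambda_k) = 0$ as a signed measure, and reorder so that $\lambda_1 < \lambda_2 < \cdots < \lambda_n$. Applying the signed measure to the Borel set $(-\infty,x]$ for $x>0$ and differentiating with respect to $x$ yields the density identity
$$\sum_{k=1}^n a_k \lambda_k e^{-\lambda_k x} = 0 \qquad \text{for all } x>0.$$
The observation is that $e^{-\lambda_1 x}$ decays slowest, so after factoring it out and sending $x \to +\infty$, every term with $k \geq 2$ vanishes, forcing $a_1 \lambda_1 = 0$ and hence $a_1 = 0$. Iterating this argument on the remaining $n-1$ terms yields $a_2 = \cdots = a_n = 0$.

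For the Gaussian case, the same scheme will apply. Starting from $\sum_{k=1}^n a_k \mu_k = 0$ with $\mu_k = \mathcal{N}(m_k,\sigma_k^2)$ and the variances reordered as $\sigma_1^2 < \cdots < \sigma_n^2$, differentiation of the CDF relation gives a density identity valid on $\R$. Here the density with the \emph{largest} variance decays slowest at infinity: for any $k<n$ the ratio of the $k$-th density to the $n$-th equals
$$\frac{\sigma_n}{\sigma_k}\exp\!\Bigl[-\tfrac{(x-m_k)^2}{2\sigma_k^2}+\tfrac{(x-m_n)^2}{2\sigma_n^2}\Bigr],$$
whose exponent is quadratic in $x$ with leading coefficient $\tfrac12(\sigma_n^{-2}-\sigma_k^{-2})<0$, so the ratio tends to $0$ as $x\to+\infty$. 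Dividing the density equation by the $n$-th density and passing to the limit therefore forces $a_n=0$, after which I iterate on the $n-1$ remaining (still pairwise distinct) variances.

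The one delicate point, and the only real obstacle, is the Gaussian asymptotics: because the exponents combine a quadratic term (through the variance) with a linear term (through the mean), one must check that the variance genuinely dictates the rate of decay regardless of the values of the means. The computation above shows that this is the case as long as the $\sigma_k^2$ are pairwise distinct, which is precisely the hypothesis of the lemma; the means $m_k$ are absorbed into subdominant linear terms and play no role in the limit. Apart from this verification, both arguments are elementary and proceed by the same inductive peeling.
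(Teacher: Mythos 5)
Your proof is correct and follows essentially the same route as the paper: pass to the density identity, order the parameters by decay rate, isolate the slowest-decaying term by a limit at $+\infty$, and peel off coefficients inductively. For the Gaussian case the paper only asserts that ``the same arguments'' apply, and your explicit check that the quadratic term $\tfrac12(\sigma_n^{-2}-\sigma_k^{-2})x^2$ dominates the linear contribution of the means is exactly the verification that claim relies on.
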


Now let us prove a second lemma that we will need to prove the invertibility of transition matrices.

\begin{lem}\label{lem-analytic}
For $n\geq 1$, let us denote by $\boldsymbol{\lambda}_n$ the Lebesgue measure on $\mathbb{R}^n$. Let $f:\mathbb{R}^n\longrightarrow\mathbb{R}$ be an analytic function and $Z(f):=\{x\in\mathbb{R}^n:\, f(x)=0\}$ its zero-set. If $\boldsymbol{\lambda}_n(Z(f))>0$ then $f\equiv 0$.
\end{lem}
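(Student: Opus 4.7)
The plan is to prove the lemma by induction on $n$, reducing everything to the one-dimensional identity theorem for real analytic functions.

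For the base case $n=1$, I would argue as follows. A set $Z(f)\subset\mathbb{R}$ with $\boldsymbol{\lambda}_1(Z(f))>0$ is uncountable, hence cannot consist only of isolated points; equivalently, it must contain an accumulation point $x_0\in\mathbb{R}$. Since $f$ is real analytic on the connected set $\mathbb{R}$ and vanishes on a set having $x_0$ as an accumulation point, the standard identity theorem for real analytic functions (all Taylor coefficients at $x_0$ must vanish, then propagate by connectedness) yields $f\equiv 0$.

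For the inductive step, assume the result for $n-1$ and let $f:\mathbb{R}^n\to\mathbb{R}$ be analytic with $\boldsymbol{\lambda}_n(Z(f))>0$. By Fubini's theorem applied to the indicator of $Z(f)$,
\[
\boldsymbol{\lambda}_n(Z(f)) \;=\; \int_{\mathbb{R}}\boldsymbol{\lambda}_{n-1}\bigl(\{(x_2,\dots,x_n)\in\mathbb{R}^{n-1}:f(x_1,x_2,\dots,x_n)=0\}\bigr)\,dx_1,
\]
so the set $A:=\{x_1\in\mathbb{R}:\boldsymbol{\lambda}_{n-1}(Z(f)_{x_1})>0\}$ has positive one-dimensional Lebesgue measure. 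For each fixed $x_1\in A$, the partial function $g_{x_1}:(x_2,\dots,x_n)\mapsto f(x_1,x_2,\dots,x_n)$ is analytic on $\mathbb{R}^{n-1}$ and its zero set has positive $(n-1)$-measure, so by the induction hypothesis $g_{x_1}\equiv 0$. Now fix an arbitrary $(x_2,\dots,x_n)\in\mathbb{R}^{n-1}$ and consider the analytic function $h:x_1\mapsto f(x_1,x_2,\dots,x_n)$ on $\mathbb{R}$. By the previous step, $h$ vanishes on all of $A$, and since $\boldsymbol{\lambda}_1(A)>0$, the base case forces $h\equiv 0$. Since $(x_2,\dots,x_n)$ was arbitrary, $f\equiv 0$, completing the induction.

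The main delicate point is really the base case, where one must invoke the identity theorem in the real analytic setting (note that, unlike in the holomorphic case, one cannot simply argue via uniqueness of analytic continuation without first producing an accumulation point of zeros). Everything else is a clean slicing argument via Fubini, and no regularity of the zero set $Z(f)$ itself is required.
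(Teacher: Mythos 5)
Your proof is correct and follows essentially the same route as the paper: induction on $n$, with the base case handled via an accumulation point of the uncountable zero set and the identity theorem, and the inductive step via a Fubini slicing argument followed by a one-dimensional sweep in the transverse variable. The only cosmetic difference is that you slice along the first coordinate while the paper slices along the last; the argument is otherwise identical.
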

\begin{proof}
We proceed by induction. Let $f:\mathbb{R}\longrightarrow\mathbb{R}$ a real analytic function such that $\boldsymbol{\lambda}_1(Z(f))>0$. Then $Z(f)$ is uncountable, hence it has an accumulation point. As $f$ is analytic, this implies that $f\equiv 0$. Now let $n\geq 2$ and assume that the result holds for analytic functions on $\mathbb{R}^{n-1}$. Let $f:\mathbb{R}^n\longrightarrow\mathbb{R}$ be an analytic function such that $\boldsymbol{\lambda}_n(Z(f))>0$. By Fubini's theorem, we have:
$$0<\boldsymbol{\lambda}_n(Z(f))=\int_{\mathbb{R}^n}\mathbbm{1}_{Z(f)}(x)dx=\int_{\mathbb{R}}dx_n\int_{\mathbb{R}^{n-1}}\mathbbm{1}_{Z(f)}(x_1,\dots,x_{n-1},x_n)dx_1\dots dx_{n-1}.$$
This implies that there exists $A\subset\R$ with $\boldsymbol{\lambda}_1(A)>0$ such that:
$$\forall x_n\in A,\, \int_{\mathbb{R}^{n-1}}\mathbbm{1}_{Z(f)}(x_1,\dots,x_{n-1},x_n)dx_1\dots dx_{n-1}>0,$$
that is
$$\forall x_n\in A,\, \boldsymbol{\lambda}_{n-1}\left[Z(f)\cap (\R^{n-1}\times\{x_n\})\right]>0.$$
Thus, for all $x_n\in A$, the function $f(\cdot,x_n):(x_1,\dots,x_{n-1})\mapsto f(x_1,\dots,x_{n-1},x_n)$ is analytic on $\mathbb{R}^{n-1}$ and vanishes on a set with positive Lebesgue measure. Therefore, for all $x_n\in A$, $f(\cdot,x_n)\equiv 0$. Let $y\in\R^{n-1}$. The function $f(y,\cdot):x_n\mapsto f(y,x_n)$ is real analytic and vanishes on $A$ with $\boldsymbol{\lambda}_1(A)>0$. Hence $f(y,\cdot)\equiv 0$. As this holds for any $y$, it shows that $f\equiv 0$. 
\end{proof}

\subsubsection{Mixtures of exponential distributions}\label{app1}

Let $(X_t,Y_t)_{t\geq 1}$ be a SHMM whose transitions are given by equation \eqref{eq-Q}, 
and whose emission distributions are mixtures of exponential distributions. Denoting by $\mathcal{E}$ the exponential distribution, the emission distribution in state $k$ and time $t$ is
$$\nu_{k,t}=\sum_{m=1}^Mp_{km}\mathcal{E}\left(\frac{\lambda_{km}}{1+\sigma_k(t)}\right),$$
where $(p_{k1},\dots,p_{kM})$ is a vector of probability, the $\lambda_{km}$ are positive and $\sigma_k(t)$ is a trigonometric polynomial whose constant term is zero and whose degree $d$ is known. This polynomial can be interpreted as a periodic scaling factor. Let $\delta_k$ the vector of coefficients of $\sigma_k$. Note that the emission densities are given by
\begin{equation}\label{eq-dens}
f_{k,t}(y) = \sum_{m=1}^Mp_{km}\frac{\lambda_{km}}{1+\sigma_k(t)}\exp\left(-\frac{\lambda_{km}}{1+\sigma_k(t)}y\right),\quad y>0.
\end{equation}

The vector of parameters of this model is $\theta = (\beta,\mathbf{p},\Lambda,\delta)$ where :
\begin{itemize}
\item $\beta = (\beta_{ijl})_{i,j,l}\in\mathbb{R}^{K\times(K-1)\times (2d+1)}$ is the parameter of the transition probabilities, given by:
$$Q_{ij}(t)\propto\exp\left[\beta_{ij1}+\sum_{l=1}^d\beta_{ij,2l}\cos\left(\frac{2\pi}{T}lt\right)+\beta_{ij,2l+1}\sin\left(\frac{2\pi}{T}lt\right)\right]$$
\item $\mathbf{p} = (p_{km})_{k,m}\in [0,1]^{K\times (M-1)}$ is the set of weights of the mixtures.
\item $\Lambda = (\lambda_{km})_{k,m}\in (0,+\infty)^{K\times M}$ is the set of parameters of the exponential distributions.
\item $\delta = (\delta_{kl})_{k,l}\in\mathbb{R}^{2d+1}$ are the coefficients of the trigonometric polynomials $(\sigma_k)_{k\in\mathsf{X}}$.
\end{itemize}

We shall make the following assumptions about the \emph{true} parameter $\theta^*$. They ensure that the mixtures have exactly $M$ components.

\begin{hyp}\label{hyp-lambda}
For all $k\in\mathsf{X}$, $\lambda^*_{k1}<\dots<\lambda^*_{kM}$.
\end{hyp}
\begin{hyp}\label{hyp-p}
For all $k\in\mathsf{X}$ and $m\in\{1,\dots,M\}$, $p_{km}^*>0$.
\end{hyp}

Let us first show that the transition matrices and the emission distributions are identifiable. Following the result of paragraph \ref{ident}, it suffices to show that Assumptions \ref{hyp-inv}-\ref{hyp-ind} are satisfied. 

\begin{itemize}
\item Clearly, for any time $t$, the transition matrix $Q^*(t)$ is irreducible as all its entries are positive.
\item Let us show that for almost every $\beta\in\mathbb{R}^{K\times(K-1)\times (2d+1)}$, $Q^*(t)$ is invertible for all $t\in\{1,\dots,T\}$. Recall Leibniz's formula for the determinant:
$$\det Q(t) = \sum_{\sigma\in\mathfrak{S}_K}\epsilon(\sigma)\prod_{i=1}^KQ_{\sigma(i),i}(t),$$
where $\mathfrak{S}_K$ is the set of permutations of $\{1,\dots,K\}$ and $\epsilon(\sigma)$ is the signature of the permutation $\sigma$. Looking at the definition of $Q(t)$, we see that $\det Q(t)=0$ if and only if
$$\sum_{\sigma\in \mathfrak{S}_K}\varepsilon(\sigma)\exp\left(\sum_{i=1}^{K-1}Z(t)\cdot\beta_{\sigma(i),i}\right)=0,$$
where $\cdot$ is the standard inner product in $\mathbb{R}^{2d+1}$, 
$$Z(t)= \left(1 , \cos\left(\frac{2\pi}{T}t\right) , \sin\left(\frac{2\pi}{T}t\right) , \cdots , \cos\left(\frac{2d\pi}{T}t\right) , \sin\left(\frac{2d\pi}{T}t\right)\right)\in\mathbb{R}^{2d+1},$$ and 
$$\beta_{\sigma(i),i} = \left( a_{\sigma(i),i,0} ; a_{\sigma(i),i,1} ; b_{\sigma(i),i,1} ; \cdots,a_{\sigma(i),i,d} ; b_{\sigma(i),i,d}\right)\in\mathbb{R}^{2d+1}.$$ For all $t\in\{1,\dots,T\}$, the function $$\phi_t:\beta\mapsto \sum_{\sigma\in \mathfrak{S}_K}\varepsilon(\sigma)\exp\left(\sum_{i=1}^{K-1}Z(t)\cdot\beta_{\sigma(i),i}\right)$$ is analytic. Therefore, by Lemma \ref{lem-analytic}, either $\phi_t(\beta)=0$ for all $\beta$, either the Lebesgue measure of its zero-set is zero. Let us define $\bar{\beta}$ by $\bar{\beta}_{ijl}=\mathbbm{1}_{i=j}\mathbbm{1}_{l=1}\log K$. The corresponding transition matrix for any time $t$ is
$$\frac{1}{2K-1}\begin{pmatrix}
K & 1 & \cdots & 1\\
1 & K & \cdots & 1\\
\vdots &  &  \ddots & \vdots\\
1 & \cdots & 1 & K 
\end{pmatrix},$$
which is invertible as it is a diagonally dominant matrix. Hence, for all $t\in\{1,\dots,T\}$, $\phi_t(\bar{\beta})\neq 0$. Thus, for any time $t$, the zero-set of $\phi_t$ is negligible. As there is a finite number of such functions, we get the desired result and Assumption \ref{hyp-inv} is satisfied.
\item The following lemma shows that Assumption \ref{hyp-ind} is easily satisfied.
 \begin{lem}\label{lem-ind2}
Assume that for all $k\in\mathsf{X}$, $\lambda^*_{k1}<\lambda^*_{k2}<\cdots<\lambda^*_{kM}$. For $t\in\{1,\dots,T\}$, let us define the set 
$$E_t=\left\lbrace\frac{\lambda^*_{km}}{1+\sigma_k(t)}:1\leq k\leq K,\, 1\leq m\leq M\right\rbrace.$$ 
If $E_t$ has cardinality at least $K$, Assumption \ref{hyp-ind} is generically satisfied.
\end{lem}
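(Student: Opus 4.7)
My plan is to reduce the linear independence of the emission distributions to a rank condition on a coefficient matrix expressed in the basis of pure exponential distributions, and then control the degenerate parameter configurations via Lemma \ref{lem-analytic}. Write $\mu_{km}(t) := \lambda^*_{km}/(1+\sigma_k(t))$. The hypothesis $\lambda^*_{k1}<\cdots<\lambda^*_{kM}$ guarantees that for each fixed state $k$ the $M$ rates $(\mu_{km}(t))_m$ are distinct, so the expansion $\nu_{k,t} = \sum_{\mu \in E_t} C_{k,\mu}\mathcal{E}(\mu)$, with $C_{k,\mu} = p_{km}$ when $\mu = \mu_{km}(t)$ for some $m$ and $0$ otherwise, is well defined. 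By Lemma \ref{lem-ind}(1) the family $\{\mathcal{E}(\mu) : \mu \in E_t\}$ is itself linearly independent, so the mixtures $(\nu_{k,t})_{k\in\mathsf{X}}$ are linearly independent if and only if the $K \times |E_t|$ matrix $C$ has rank $K$; this requires $|E_t|\geq K$.

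The core step is to show that $\mathrm{rank}(C) = K$ on a subset of $\Theta$ of full Lebesgue measure. The idea is to identify an open set $\mathcal{O}$ on which full rank is automatic, and then control the complement by analyticity. Let $\mathcal{O}$ consist of those $\theta$ such that, for every $t \in \{1,\dots,T\}$, the $KM$ values $(\mu_{km}(t))_{k,m}$ are pairwise distinct. On $\mathcal{O}$ the $K$ rows of $C$ have pairwise disjoint supports and each row has strictly positive entries by Assumption \ref{hyp-p}, so the rows are trivially linearly independent and $\mathrm{rank}(C) = K$, which gives Assumption \ref{hyp-ind}.

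To finish, I would show that $\mathcal{O}^c$ has Lebesgue measure zero. It is the finite union, over pairs $(k,m)\neq(k',m')$ and times $t$, of the zero sets of the analytic functions
$$(\Lambda,\delta) \longmapsto \lambda_{km}\bigl(1+\sigma_{k'}(t)\bigr) - \lambda_{k'm'}\bigl(1+\sigma_k(t)\bigr).$$
Each such function is not identically zero on the parameter space: taking $\lambda_{km} \neq \lambda_{k'm'}$ and all trigonometric coefficients of $\sigma_k$ and $\sigma_{k'}$ equal to zero makes it nonzero. Lemma \ref{lem-analytic} then implies that each of these zero sets has Lebesgue measure zero, so a finite union of them still has measure zero, and $\mathcal{O}$ has full Lebesgue measure in $\Theta$. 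Hence Assumption \ref{hyp-ind} is generically satisfied.

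The main conceptual subtlety is that the hypothesis of the lemma only requires $|E_t|\geq K$, which allows the intermediate regime $K \leq |E_t| < KM$. In that regime the combinatorial pattern of zeros in $C$ changes between regions of parameter space, and the rank-$K$ condition cannot be encoded as a single analytic equation on $\Theta$. The argument above bypasses this difficulty by proving the even stronger genericity on $\mathcal{O}$, where no coincidences occur at all and the disjoint-support argument replaces any case analysis; since $\mathcal{O}$ already has full Lebesgue measure, this suffices to establish that $\{\theta : \nu_{1,t},\dots,\nu_{K,t}\text{ are linearly independent for all }t\}$ has full Lebesgue measure in $\Theta$.
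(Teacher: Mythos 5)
Your argument is internally sound, but it proves a genuinely different genericity statement from the one the paper's proof establishes, and the difference is concentrated exactly where the lemma's hypothesis has content. The paper fixes the rates $\lambda^*_{km}$ and the scaling polynomials $\sigma_k$ (hence the support pattern of the coefficient matrix, which it calls $B(t)$), explicitly allows cross-state coincidences so that $K\le |E_t|<KM$, and argues that $\mathrm{rank}\,B(t)=K$ fails only when the weights $(p^*_{km})$ lie in the zero set of a polynomial: the genericity is over $\mathbf{p}$, conditional on the cardinality hypothesis (compare Lemma \ref{lem-ind3}, which says ``for almost every choice of $\mathbf{p}$''). You instead show that the coincidence locus is Lebesgue-null in $(\Lambda,\delta)$ and that off it, where $|E_t|=KM$, the disjoint supports of the rows of $C$ together with Assumption \ref{hyp-p} give independence for every admissible $\mathbf{p}$; the hypothesis $|E_t|\ge K$ is never used except as an automatic consequence. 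As you acknowledge, this says nothing about a true parameter that actually sits on the coincidence locus with $K\le|E_t|<KM$, which is the only regime in which the lemma's ``if'' clause is doing work; so read as the paper reads it, your proof has a gap rather than a shortcut. That said, your version is defensible and in one respect more robust than the original: in the intermediate regime the paper's rank argument implicitly requires the row supports of $B(t)$ to admit a system of distinct representatives (a Hall-type condition), which $|E_t|\ge K$ alone does not guarantee --- for instance with $K=5$, $M=2$ and three states sharing the same two scaled rates one has $|E_t|=6\ge K$ yet $\mathrm{rank}\,B(t)\le 4$ for every choice of $\mathbf{p}$, so the paper's stated conclusion fails there. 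Your disjoint-support argument is immune to this, at the price of declaring a different (and larger, in the $(\Lambda,\delta)$ directions) set of parameters exceptional.
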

\begin{proof}
For $t\in\{1,\dots,T\}$, let us denote by $p(t)$ the cardinality of $E_t$. Thus $K\leq p(t)\leq KM$ and we can write $E_t=\{\tilde{\lambda}^*_1,\dots,\tilde{\lambda}^*_{p(t)}\}$. For all $k\in\{1,\dots K\}$, $\nu_{k,t}$ is a linear combination of exponential distributions whose parameters belong to $E_t$. Hence,
$$\nu_{k,t} = \sum_{j=1}^{p(t)}B_{kj}(t)\mathcal{E}(\tilde{\lambda}_j^*),$$
where the $B_{kj}(t)$ are among the $p_{km}^*$. The $\tilde{\lambda}_j^*$ being pairwise distinct, the family $\left(\mathcal{E}(\tilde{\lambda}_j^*)\right)_{1\leq j\leq p(t)}$ is linearly independent, using Lemma \ref{lem-ind}. Hence, the family $\left(\nu_{k,t}\right)_{k\in\mathsf{X}}$ is linearly independent if and only if the rank of the matrix $B(t)$ is $K$ (this requires that $p(t)\geq K$). This holds true except if the family $(p^*_{km})$ belongs to a set of roots of a polynomial. Moreover, the entries of $B(t)$ are among the $p^*_{km}$. This implies that the range of the map $t\mapsto B(t)$ is finite. Thus we obtain linear independence of the $\left(\nu_{k,t}\right)_{k\in\mathsf{X}}$ for all $t$, except if the $p^*_{km}$ belong to a finite union of roots of polynomials.

\end{proof}
\end{itemize}

Then, we can identify the parameter themselves from the emission distributions and the transitions matrices. 
\begin{itemize}
\item Let us denote by $\mathbb{V}$ the variance operator. For $t\in\{1,\dots, T\}$ and $k\in\mathsf{X}$, let 
$$\tilde{s}(t):=\frac{1+\sigma_k(t)}{1+\sigma_k(1)}=\sqrt{\frac{\mathbb{V}(Y_t\mid \{X_t=k\})}{\mathbb{V}(Y_1\mid \{X_1=k\})}}$$
If the emission distributions are known, $\tilde{s}(t)$ can be computed for any time step $t$. Let $c$ be the constant coefficient of the trigonometric polynomial $\tilde{s}$. It follows from the above that $c = \frac{1}{1+\sigma_k(1)}$. Hence we get $\sigma_k(t) = \frac{\tilde{s}(t)}{c}-1$. From this we can obtain $\delta^*_k$.
\item Using Assumptions \ref{hyp-lambda}-\ref{hyp-p} and equation \eqref{eq-dens}, we have 
$$\lim\limits_{y\to\infty}\frac{\log f_{k,1}^*(y)}{y}=-\frac{\lambda_{k1}^*}{1+\sigma_k(1)},$$
from which we find $\lambda_{k1}^*$. Then we can determine $p_{k1}^*$ using the following formula:
$$p_{k1}^* = \exp\left[\lim\limits_{y\to\infty}\left(\log f_{k,1}^*(y)+\frac{\lambda_{k1}^*y}{1+\sigma_k(1)} - \log\frac{\lambda_{k1}^*}{1+\sigma_k(1)}\right)\right].$$
Step by step, following the same method, we identify the rest of the parameters of the emission distributions.
\item In order to retrieve $\beta$ from the transition matrices $(Q(t))_{1\leq t\leq T}$, notice that for any $t\in\{1,\dots,T\}$ and for any $i\in\mathsf{X}$,
$$\sum_{j=1}^{K-1}Q_{ij}(t) = \frac{\sum_{j=1}^{K-1}\exp\left[\sum_{l=0}^d\left(a_{ijl}\cos\left(\frac{2\pi lt}{T}\right)+b_{ijl}\sin\left(\frac{2\pi lt}{T}\right)\right)\right]}{1+\sum_{j=1}^{K-1}\exp\left[\sum_{l=0}^d\left(a_{ijl}\cos\left(\frac{2\pi lt}{T}\right)+b_{ijl}\sin\left(\frac{2\pi lt}{T}\right)\right)\right]},$$
 so that
$$\sum_{j=1}^{K-1}\exp\left[\sum_{l=0}^d\left(a_{ijl}\cos\left(\frac{2\pi lt}{T}\right)+b_{ijl}\sin\left(\frac{2\pi lt}{T}\right)\right)\right]=\frac{\sum_{j=1}^{K-1}Q_{ij}(t)}{1-\sum_{j=1}^{K-1}Q_{ij}(t)}.$$
Then, for any $j\in\{1,\dots,K-1\}$, 
$$\exp\left[\sum_{l=0}^d\left(a_{ijl}\cos\left(\frac{2\pi lt}{T}\right)+b_{ijl}\sin\left(\frac{2\pi lt}{T}\right)\right)\right] = Q_{ij}(t)\frac{\sum_{j'=1}^{K-1}Q_{ij'}(t)}{1-\sum_{j'=1}^{K-1}Q_{ij'}(t)}.$$
As a trigonometric polynomial of degree $d$ has at most $2d$ zeros over a period, this implies that we can retrieve $\beta$, as long as $T>2d$. Hence Assumption \ref{hyp-ident} is satisfied.
\end{itemize}

\paragraph{}In order to prove the strong consistency of the maximum likelihood estimator, it remains to check that Assumptions \ref{hyp-alpha} to \ref{hyp-ct} are satisfied. 
\begin{itemize}
\item Assuming that there exists $\beta_{\min}$ and $\beta_{\max}$ such that for all $\theta\in\Theta$ and for all $i,j,l$, $\beta_{ijl}\in [\beta_{\min},\beta_{\max}]$, Assumption \ref{hyp-alpha} is satisfied.
\item Assumption \ref{hyp-continue} is clearly satisfied. %We will make the following assumptions on the space of parameters.
\item Assumption \ref{hyp-delta} is satisfied provided that:
\begin{itemize}
\item $\inf_{\theta\in\Theta}\inf_{k\in\mathsf{X},m\in\{1,\dots,M\}}p_{km}>0$
\item there exists positive numbers $\lambda_{\min}$ and $\lambda_{\max}$ such that for all $\theta\in\Theta$ and for all $k,m$, $\lambda_{km}\in~ [\lambda_{\min},\lambda_{\max}]$
\item there exists positive numbers $\sigma_{\min}$ and $\sigma_{\max}$ such that for all $\theta\in\Theta$, for all $k\in\mathsf{X}$ and for all $t\in\{1,\dots,T\}$, $\sigma_k(t)\in~ [\sigma_{\min},\sigma_{\max}]$. Implicitly, this is a boundedness condition on the parameter $\delta$.
\end{itemize}
\item Under the same boundedness assumptions, we obtain Assumption \ref{hyp-ct}.
\end{itemize}

Thus, under weak conditions on the parameters, we can apply our identifiability and convergence results to this particular model.

\subsubsection{Mixtures of Gaussian distributions}\label{app2}

We choose the same transition matrices as in the previous example, and the emission distribution in state $k$ at time $t$ writes:
$$\nu_{k,t}=\sum_{m=1}^Mp_{km}\mathcal{N}\left(m_k(t),\sigma_{km}^2\right)$$
where $m_k$ is a trigonometric polynomial with (known) degree $d$, $(p_{k1},\dots,p_{kM})$ is a vector of probability and $\mathcal{N}(m,\sigma^2)$ refers to the Gaussian distribution with mean $m$ and variance $\sigma^2$. The following lemma ensures that Assumption \ref{hyp-ind} is easily satisfied.

\begin{lem}\label{lem-ind3}
Assume that for all $k\in\mathsf{X}$, $\sigma^2_{k1},\dots,\sigma^2_{kM}$ are pairwise distinct. Let 
$$E = \{\sigma^2_{km}, 1\leq m\leq M,\, 1\leq k\leq K\}.$$
If $E$ has at least $K$ elements, then, for all $t\in\{1,\dots, T\}$, the distributions $\nu_{1,t},\dots,\nu_{K,t}$ are linearly independent for almost every choice of $\mathbf{p}$.
\end{lem}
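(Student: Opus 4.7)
My plan is to mirror the proof of Lemma~\ref{lem-ind2}, treating the Gaussian case as a direct analogue. Fix $t \in \{1, \ldots, T\}$, and enumerate $E = \{\tilde{\sigma}_1^2, \ldots, \tilde{\sigma}_p^2\}$ with $p = |E| \geq K$. As a preliminary I would upgrade Lemma~\ref{lem-ind}(2) to the statement that any family of Gaussian distributions $\{\mathcal{N}(\mu_i, \tau_i^2)\}_i$ with pairwise distinct pairs $(\mu_i, \tau_i^2)$ is linearly independent. The proof groups the Fourier transforms of a putative linear combination by variance, uses the Gaussian decay rate $e^{-\tau^2 u^2/2}$ to isolate the smallest-variance class as $|u| \to \infty$, and within a variance class reduces to linear independence of the characters $e^{i\mu u}$ for distinct $\mu$'s.

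Next, since $\sigma_{k,1}^2, \ldots, \sigma_{k,M}^2$ are pairwise distinct, I would write
$$
\nu_{k,t} \;=\; \sum_{j=1}^p B_{kj}(\mathbf{p})\, \mathcal{N}(m_k(t), \tilde{\sigma}_j^2),
$$
with $B_{kj}(\mathbf{p}) = p_{km}$ if $\sigma_{k,m}^2 = \tilde{\sigma}_j^2$ for the (necessarily unique) such $m$, and $B_{kj}(\mathbf{p}) = 0$ otherwise. After identifying any duplicate pairs $(m_k(t), \tilde{\sigma}_j^2)$ across different $(k,j)$, the preliminary step guarantees that the reduced Gaussian family is linearly independent, so linear independence of $(\nu_{k,t})_{k \in \mathsf{X}}$ becomes equivalent to $\mathrm{rank}\, B(\mathbf{p}) = K$; this is a polynomial condition on the entries of $\mathbf{p}$.

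By the same dichotomy used in Lemma~\ref{lem-ind2}, the set of $\mathbf{p}$ for which the rank condition fails is either everything or a Lebesgue-null set, so it suffices to exhibit one $\mathbf{p}$ at which some $K\times K$ minor of $B(\mathbf{p})$ does not vanish. Invoking the assumption $|E| \geq K$, I would apply Hall's marriage theorem to the sets $S_k := \{\sigma_{k,m}^2 : 1 \leq m \leq M\}$, each of cardinality $M$ and with union $E$, to obtain a system of distinct representatives $\sigma_k^\sharp \in S_k$; for $\mathbf{p}$ with all $p_{k,m}>0$ and sufficiently generic, the $K \times K$ submatrix of $B(\mathbf{p})$ singled out by the corresponding columns has a nonzero diagonal term $\prod_k p_{k, m_k^\sharp}$ which is not cancelled by the other contributions. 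The second statement of the lemma then follows by the same argument, applied to the enlarged basis of Gaussians $\mathcal{N}(m_k(t), \tilde{\sigma}_j^2)$.

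The main obstacle is the combinatorial step in the last paragraph: verifying Hall's marriage condition for the family $(S_k)$ under the sole hypothesis $|E| \geq K$. In the degenerate case where several means $m_k(t)$ coincide, one needs the variance patterns of the corresponding states to provide enough distinct representatives, which is where the counting $|E| \geq K$ is essential; modulo this check, the remainder of the proof closes exactly as in Lemma~\ref{lem-ind2}.
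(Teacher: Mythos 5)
Your route is the same as the paper's: the published proof of Lemma \ref{lem-ind3} is the single sentence ``Using the second statement of Lemma \ref{lem-ind}, the proof is the same as in Lemma \ref{lem-ind2}'', i.e.\ expand each $\nu_{k,t}$ over a linearly independent family of Gaussians, reduce linear independence of the $\nu_{k,t}$ to a rank condition on the coefficient matrix $B(\mathbf{p})$, and invoke the polynomial dichotomy. Your two additions are not redundant care; both are genuinely needed and the paper skips them. First, Lemma \ref{lem-ind}(2) only covers Gaussians with pairwise distinct variances, whereas the family $\{\mathcal{N}(m_k(t),\sigma^2_{km})\}_{k,m}$ may contain two members with equal variances but different means, since the variances are only assumed distinct within each state $k$; your upgrade to ``pairwise distinct pairs $(\mu,\tau^2)$ imply linear independence'' is the correct fix, and your Fourier-transform sketch of it is sound. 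Note also that when $m_1(t),\dots,m_K(t)$ are pairwise distinct, that upgraded lemma already gives linear independence of the $\nu_{k,t}$ for every $\mathbf{p}$ with positive weights, with no condition on $E$ at all; the set $E$ only matters when means collide.

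The obstacle you flag at the end is a real gap, and it is a gap in the lemma itself rather than only in your write-up. Exhibiting one $\mathbf{p}$ with $\mathrm{rank}\,B(\mathbf{p})=K$ requires a transversal of the supports $S_k=\{\sigma^2_{km}:1\le m\le M\}$, and $|E|\ge K$ does not imply Hall's condition. Concretely, take $K=4$, $M=2$, all $m_k\equiv 0$, $(\sigma^2_{11},\sigma^2_{12})=(\sigma^2_{21},\sigma^2_{22})=(\sigma^2_{31},\sigma^2_{32})=(1,2)$ and $(\sigma^2_{41},\sigma^2_{42})=(3,4)$: then $|E|=4=K$, yet $\nu_{1,t},\nu_{2,t},\nu_{3,t}$ all lie in the two-dimensional span of $\mathcal{N}(0,1)$ and $\mathcal{N}(0,2)$, so they are linearly dependent for \emph{every} $\mathbf{p}$. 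The same defect is latent in Lemma \ref{lem-ind2}. The statement becomes correct if one replaces ``$E$ has at least $K$ elements'' by Hall's condition on the sets $S_k$ (for every $I\subset\mathsf{X}$, $|\bigcup_{k\in I}S_k|\ge|I|$), or restricts to pairwise distinct means; once a transversal exists, your observation that the corresponding diagonal monomial cannot be cancelled (each variable $p_{km}$ occupies exactly one column of row $k$, so distinct permutations contribute distinct monomials) does close the argument.
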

\begin{proof}
Using the second statement of Lemma \ref{lem-ind}, the proof is the same as in Lemma \ref{lem-ind2}.
\end{proof}

Hence we can identify the transition matrices and the emission distributions, up to state labelling. Thus, using the fact that finite Gaussian mixtures are identifiable, we can identify, for each state $k$ and each time step $t$, the vector $(p_{k1},\dots,p_{kM})$, the mean $m_k(t)$ and the variances $\sigma_{km}^2$, up to permutation of the components of the mixture. Finally, for each $k$, we can identify the coefficients of the trigonometric polynomial $m_k(\cdot)$ from its values $(m_k(1),\dots,m_k(T))$, so that Assumption \ref{hyp-ident} is satisfied.
Then, under boundedness conditions on the parameters that are very similar to those of the previous example, we see that Assumptions \ref{hyp-alpha} to \ref{hyp-ct} are satisfied. Hence the strong consistency of the maximum likelihood estimator.

\section{Simulation study}\label{sec3}

\subsection{Computation of the maximum likelihood estimator}\label{em}

We have already shown the strong consistency of the maximum likelihood estimator. This paragraph deals with its practical computation. Assume that we have observed a trajectory of the process $(Y_t)_{t\geq 1}$ with length $n$. Let $X:=(X_1,\dots,X_n)$ and $Y:=(Y_1,...,Y_n)$ and recall that $X$ is not observed. The likelihood function with initial distribution $\pi$ is then
$$L_{n,\pi}[\theta;Y]=\sum_{\mathbf{x}}\pi_{x_1}f^{\theta_Y}_{x_1,1}(Y_1)\prod_{t=2}^nQ_{x_{t-1}x_t}(t-1)f^{\theta_Y}_{x_t,t}(Y_t),$$
where $\mathbf{x}=\left(x_1,\dots,x_n\right)$. As $X$ is not observed, we use the Expectation Maximization (EM) algorithm to find a local maximum of the log-likelihood function. The EM algorithm is a classical algorithm to perform maximum likelihood inference with incomplete data. See \cite{dempster77} for a general formulation of the EM algorithm and \citep{baum1970} for its application to HMM. For any initial distribution $\pi$, we define the \emph{complete} log-likelihood by:

\begin{align*}
\log L_{n,\pi}\left[\theta;(X,Y)\right]:=\log \pi_{X_1}+\sum_{t=1}^{n-1}\log Q_{X_tX_{t+1}}(t)+\sum_{t=1}^n\log f^{\theta_Y}_{X_t,t}\left(Y_t\right).
\end{align*}

This would be the log-likelihood function if $X$ were observed. The algorithm starts from an initial vector of parameters $(\theta^{(0)},\pi^{(0)})$ and alternates between two steps to construct a sequence of parameters $\left(\theta^{(q)},\pi^{(q)}\right)_{q\geq 0}.$ 
\begin{itemize}
\item The \textbf{E} step is the computation of the \emph{intermediate quantity} defined by:
$$\mathbf{Q}\left[\left(\theta,\pi\right),\left(\theta^{(q)},\pi^{(q)}\right)\right]:=\mathbb{E}^{\pi^{(q)},\theta^{(q)}}\left[\log L_{n,\pi}\left(\theta;(X,Y)\right)\mid Y\right].$$
This requires to compute the \emph{smoothing probabilties}, that are the \emph{a posteriori} distributions of $X$ given $Y$. More precisely, the following quantities are to be computed:
$$\pi_{t\mid n}^{(q)}(k):=\mathbb{P}^{\pi^{(q)},\theta^{(q)}}\left(X_t=k\mid Y\right)$$
 for all $k\in\mathsf{X}$ and $1\leq t\leq n$, and
$$\pi_{t,t+1 \mid n}^{(q)}(k,l) := \mathbb{P}^{\pi^{(q)},\theta^{(q)}}\left(X_t = k,X_{t+1} = l\mid Y\right)$$ for $k,l\in\mathsf{X}$ and $1\leq t\leq n-1$. The computation of the smoothing probabilities can be done efficiently using the \emph{forward-backward} algorithm. See \cite{rabiner86} or \cite{cappe2009} for a description of this algorithm in the framework of HMM. The adaptation of the forward-backward algorithm for SHMM is straightforward. The intermediate quantity writes:

\begin{equation*}
\begin{split}
\mathbf{Q}\left[\left(\theta,\pi\right),\left(\theta^{(q)},\pi^{(q)}\right)\right]&=\mathbb{E}^{\pi^{(q)},\theta^{(q)}}\left[\log L_{n,\pi}\left(\theta;(X,Y)\right)\mid Y\right]\\
 &= \sum_{k=1}^K\pi_{1\mid n}^{(q)}(k)\log\pi_k\\
 &+ \sum_{t=1}^{n-1}\sum_{k=1}^K\sum_{l=1}^K\pi_{t,t+1\mid n}^{(q)}(k,l)\log Q_{kl}(t)\\
 &+ \sum_{t=1}^n\sum_{k=1}^K\pi_{t\mid n}^{(q)}(k)\log f^{\theta_Y}_{k,t}(Y_t).
\end{split}
\end{equation*}

\item The \textbf{M} step consists in finding $\left(\theta^{(q+1)},\pi^{(q+1)}\right)$ maximizing the function $\left(\theta,\pi\right)\mapsto \mathbf{Q}\left[\left(\theta,\pi\right),\left(\theta^{(q)},\pi^{(q)}\right)\right]$, or at least increasing it. Depending on the specific models chosen for $t\mapsto Q(t)$ and $t\mapsto\theta^Y(t)$ it is sometimes possible to find an analytic formula for the solution of this maximization problem. However, in most cases, a numerical optimization algorithm is required.
\end{itemize}
It can be shown that the sequence of likelihoods $L_n\left[\theta^{(q)};Y\right]$ is increasing and that under regularity conditions, it converges to a local maximum of the likelihood function \citep{wu83}. We alternate the two steps of the EM algorithm until we reach a stopping criterion. For example, we can stop the algorithm when the relative difference $\frac{L_{n,\pi^{(q)}}(\theta^{(q+1)};Y)-L_{n,\pi^{(q)}}(\theta^{(q)};Y)}{L_{n,\pi^{(q)}}(\theta^{(q)};Y)}$ drops below some threshold $\varepsilon$. The last computed term of the sequence $\left(\theta^{(q)}\right)_{q\geq 0}$ is then an approximation of the maximum likelihood estimator. However, if the EM algorithm does converge, it only guarantees that the limit is a \emph{local} maximum of the likelihood function, which may not be global. Therefore it is a common practice to run the algorithm a large number of times, starting from different (e.g. randomly chosen) initial points and select the parameter with the largest likelihood. In \cite{biernacki2003}, the authors compare several procedures to initialize the EM algorithm, using variants such as SEM \citep{broniatowski1983}. Introducing randomness in the EM algorithm provides a way to escape from local maxima.

\subsection{Example}

\paragraph{Model}
Let us consider the following SHMM $(X_t,Y_t)_{t\geq 1}$:

\begin{itemize}
\item Two states: $K=2$
\item $T = 365$
\item $Q(t)$, the transition matrix at time $t$ is determined by its first column, given by:
$$Q_{i1}(t)=\frac{\exp\left(\beta_{i1}+\beta_{i2}\cos\left(\frac{2\pi t}{T}\right)+\beta_{i3}\sin\left(\frac{2\pi t}{T}\right)\right)}{1+\exp\left(\beta_{i1}+\beta_{i2}\cos\left(\frac{2\pi t}{T}\right)+\beta_{i3}\sin\left(\frac{2\pi t}{T}\right)\right)}$$
Thus the law of the Markov chain $(X_t)_{t\geq 1}$ is determined by $\pi$ (the distribution of $X_1$) and the $\beta_{il}$ for $1\leq i\leq K$ and $1\leq l\leq 3$.
\item The emission distributions are Gaussian:
$$Y_t\mid\{X_t=k\}\sim\mathcal{N}\left(m_k(t),\sigma^2_k\right)$$
where the mean $m_k(t)$ is given by
$$m_k(t) = \mu_k + \delta_{k1}\cos\left(\frac{2\pi t}{T}\right) + \delta_{k2}\sin\left(\frac{2\pi t}{T}\right)$$
Thus the parameters of the emission distributions are the $\mu_k$, $\delta_{k1}$, $\delta_{k2}$ and $\sigma^2_k$, for $1\leq k\leq K$. Note that this is a special case of the model introduced in paragraph \ref{app2}, with $M=d=1$.
\end{itemize}
$\theta$ is the vector containing both the parameters of the transitions and the parameters of the emission distributions. Then for any choice of $\theta$ and $\pi$, it is easy to simulate a realization of $(X_t,Y_t)_{1\leq t\leq n_{\max}}$. First we simulate the Markov chain $(X_t)$, then we simulate $(Y_t)$ conditionally to $(X_t)$. We chose $n_{\max} = 200000$.\\

\paragraph{True parameters}

\begin{itemize}
\item The transition probabilities are given by
$$\pi^*=\begin{pmatrix}
0.5 & 0.5
\end{pmatrix},\quad 
\beta^*_1:=(\beta^*_{1l})_{1\leq l\leq 3} = \begin{pmatrix}
1 & 0.7 & 0.5
\end{pmatrix},\quad
\beta^*_2:=(\beta^*_{2l})_{1\leq l\leq 3} = \begin{pmatrix}
-1 & -0.6 & 0.7
\end{pmatrix}$$
The graphs of the functions $t\mapsto Q^*_{11}(t)$ (black) and $t\mapsto Q^*_{22}(t)$ (red) for $1\leq t\leq 365$ are depicted in Figure \ref{qstar}.

\begin{figure}[H]
\centering
\includegraphics[scale=0.5]{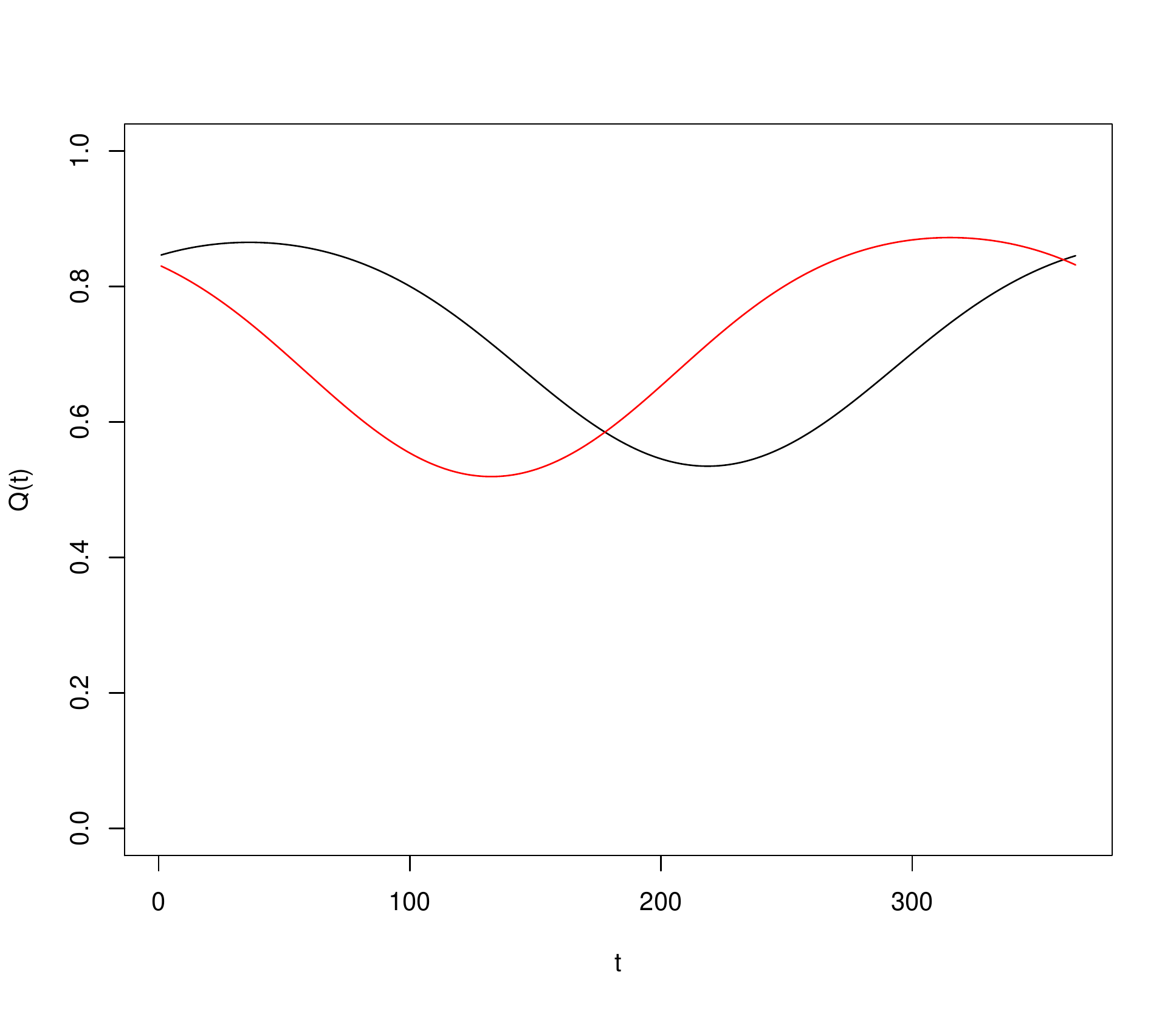}
\caption{Transition probabilities $Q^*_{11}(t)$ (black) and $Q^*_{22}(t)$ (red)}\label{qstar}
\end{figure}

\item The parameters of the emission distributions are given by
$$\mu^* = \begin{pmatrix}
-1 & 2
\end{pmatrix},\quad (\sigma^2)^* = \begin{pmatrix}
1 & 0.25
\end{pmatrix},\quad \delta_1^* = \begin{pmatrix}
2.5 & 4
\end{pmatrix},\quad \delta_2^* = \begin{pmatrix}
-1.5 & 3.5
\end{pmatrix}$$

\end{itemize}

Figure \ref{ysim} depicts a simulation of $(Y_t)_{1\leq t\leq 1000}$ using the parameter $\theta^*$. The lines correspond to the conditionnal means $m_k(t)$.

\begin{figure}[H]
\centering
\includegraphics[scale=0.4]{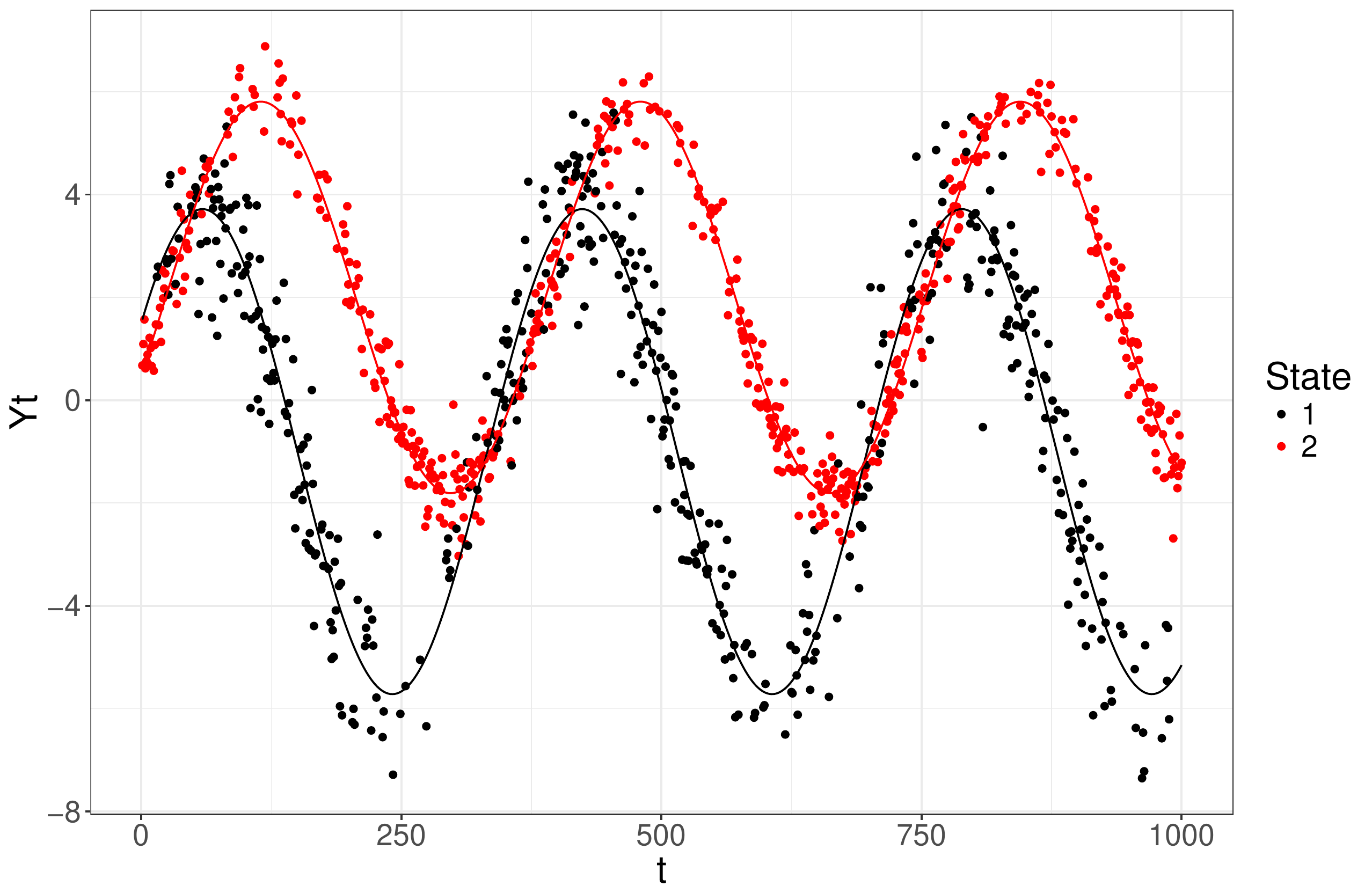}
\caption{Realization of $Y_1,\dots,Y_{1000}$}\label{ysim}
\end{figure}

\paragraph{Estimation}

In order to compute the MLE in this model, we use the EM algorithm described in paragraph \ref{em}. The estimation procedure is described below:
\begin{enumerate}
\item Select a random initial parameter and run the EM algorithm using $Y_1,\dots,Y_{500}$, with $50$ iterations starting from this initial point.
\item Repeat the first step $30$ times.
\item Among the $30$ initial points candidates, select the one that led to the largest log-likelihood after the EM.
\item Run a long EM using the selected initial point and $Y_1,\dots,Y_n$. Stop when the relative difference in log-likelihoods drops below $10^{-7}$. Then the result of the last $\mathbf{M}$-step is $\hat{\theta}_{n}$.
\end{enumerate}

Hence we can compute a sequence of MLE $\left(\hat{\theta}_{n_p}\right)_{p\geq 1}$. We chose $n_p = 100p$, for $1\leq p\leq 150$.

\paragraph{Results}

The following graphs show the estimated parameters for the emission distributions, i.e. the $\mu_k$, $\delta_k$ and $\sigma^2_k$. The dashed lines represent the true parameters. 

\begin{figure}[H]
\centering
\includegraphics[scale=0.4]{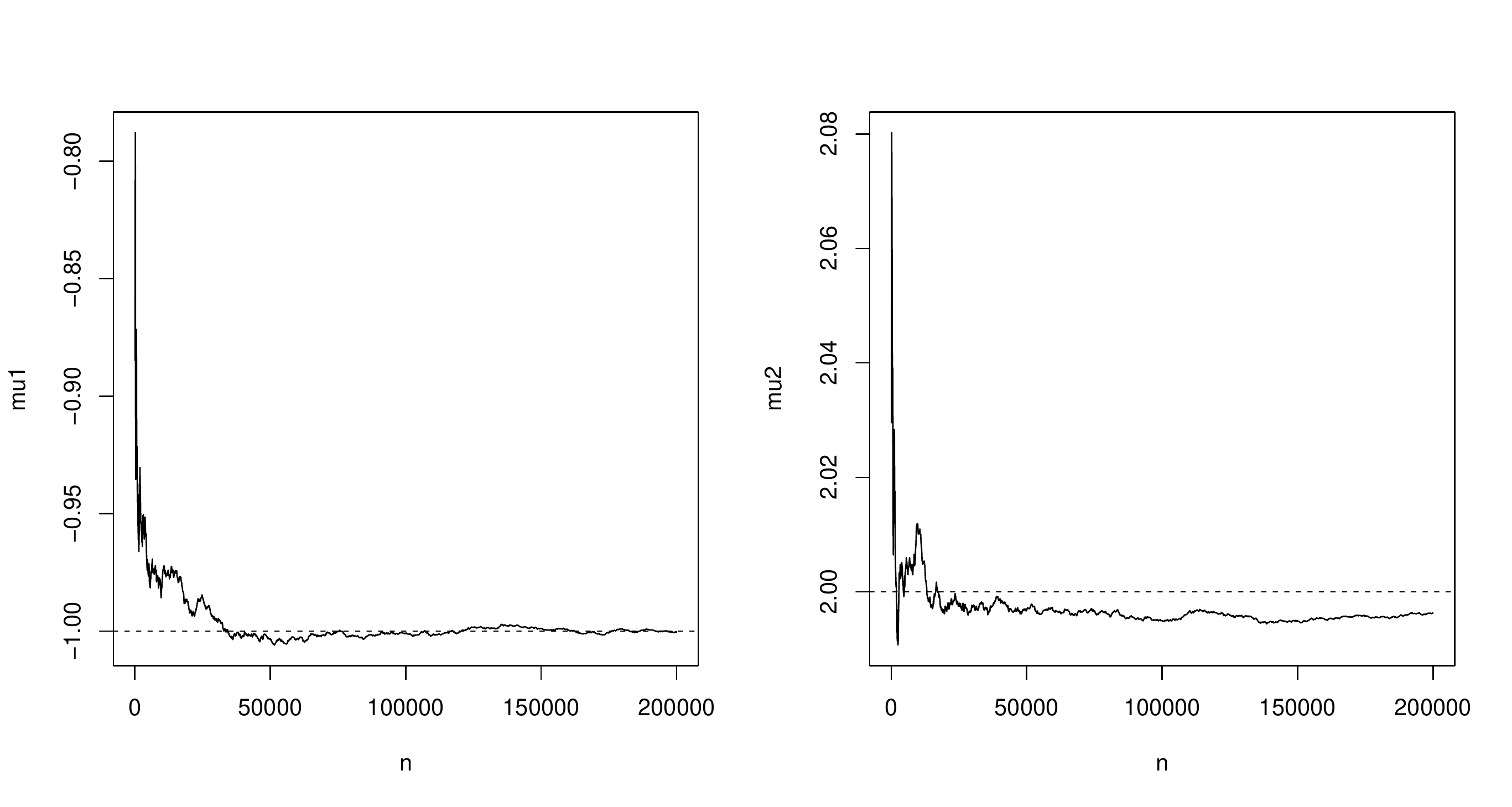}
\caption{Estimators $\hat{\mu}_{1,n}$ (left) and $\hat{\mu}_{2,n}$ (right)}\label{muhat}
\end{figure}

\begin{figure}[H]
\centering
\includegraphics[scale=0.5]{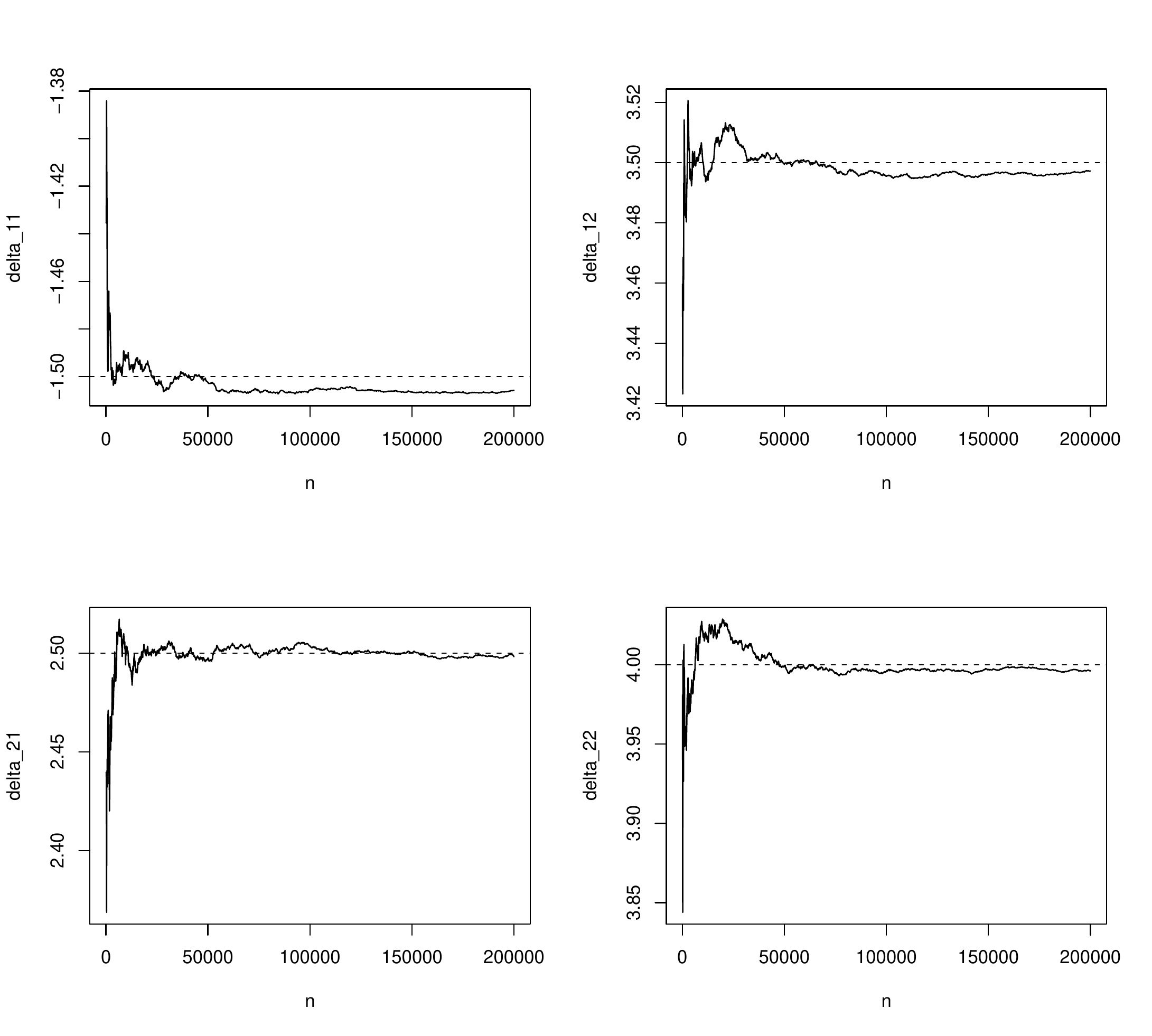}
\caption{Estimators $\hat{\delta}_{k,n}$}\label{deltahat}
\end{figure}

\begin{figure}[H]
\centering
\includegraphics[scale=0.4]{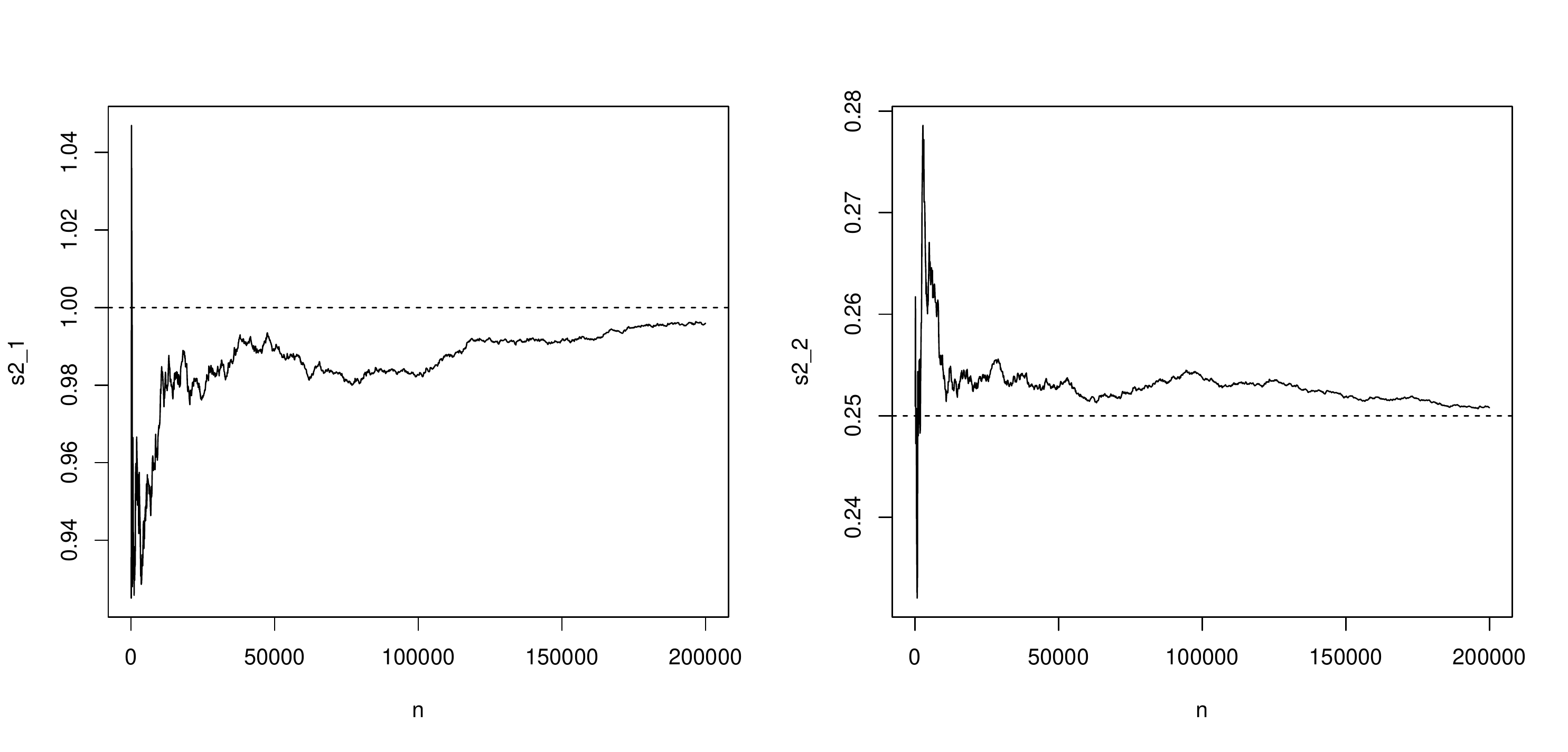}
\caption{Estimators $\hat{\sigma}^2_{1,n}$ (left) and $\hat{\sigma}^2_{2,n}$ (right)}\label{s2hat}
\end{figure}

We can also check that the periodic means $m_k(t)$ have been well estimated. In Figure \ref{mkhat} are drawn the graphs of the true means (solid line) and their estimated counterparts (dashed line), for each state.

\begin{figure}[H]
\centering
\includegraphics[scale=0.4]{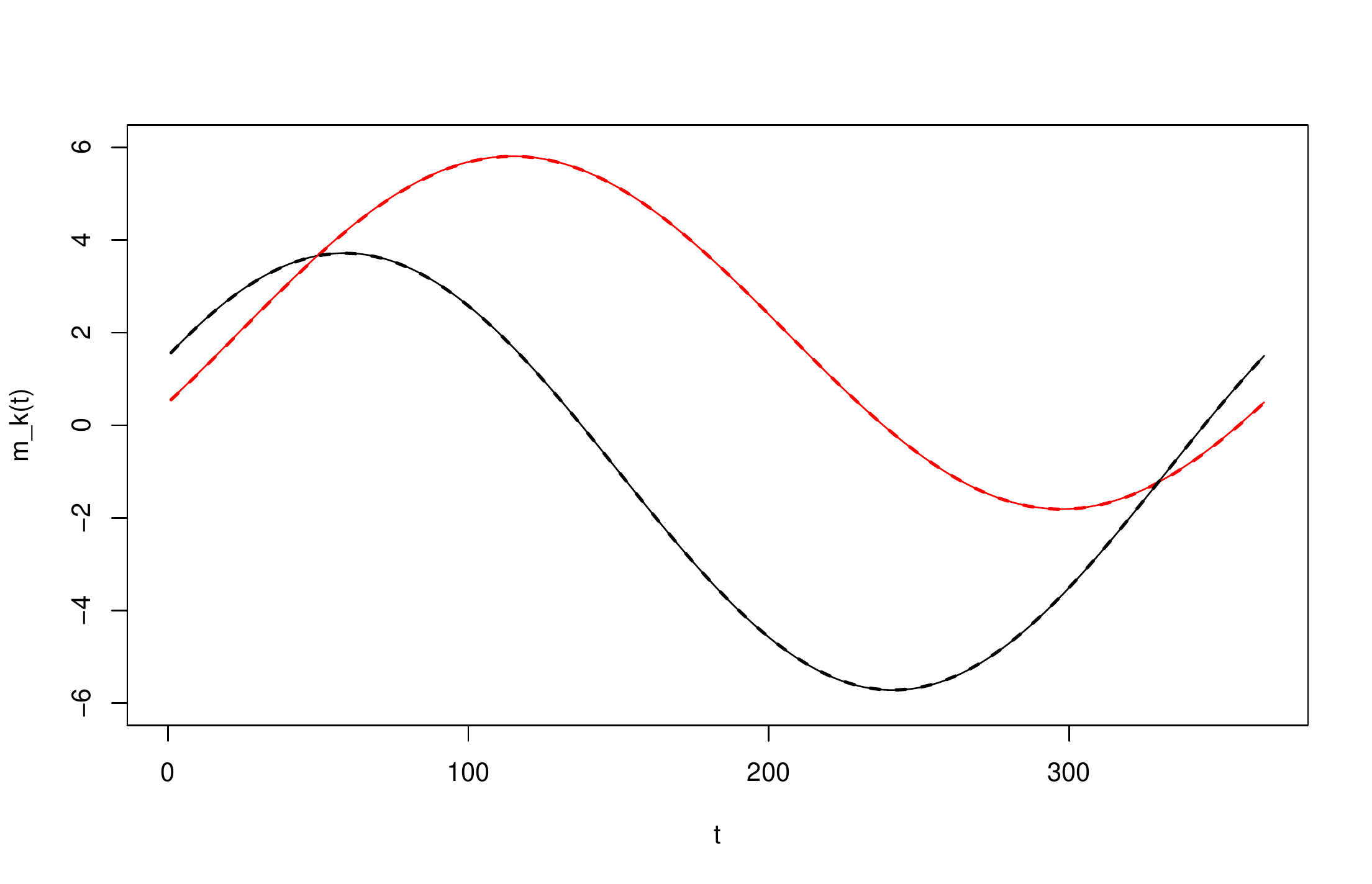}
\caption{True means $m_k(t)$ and their estimators $\hat{m}_k(t)$ (here computed with $n=n_{\max}$)}\label{mkhat}
\end{figure}

The coefficients $\beta_{kl}$ for the transition matrices, as well as the transition matrices themselves, are well estimated, as shown in Figures \ref{bethat} and \ref{Qhat}.

\begin{figure}[H]
\centering
\includegraphics[scale=0.5]{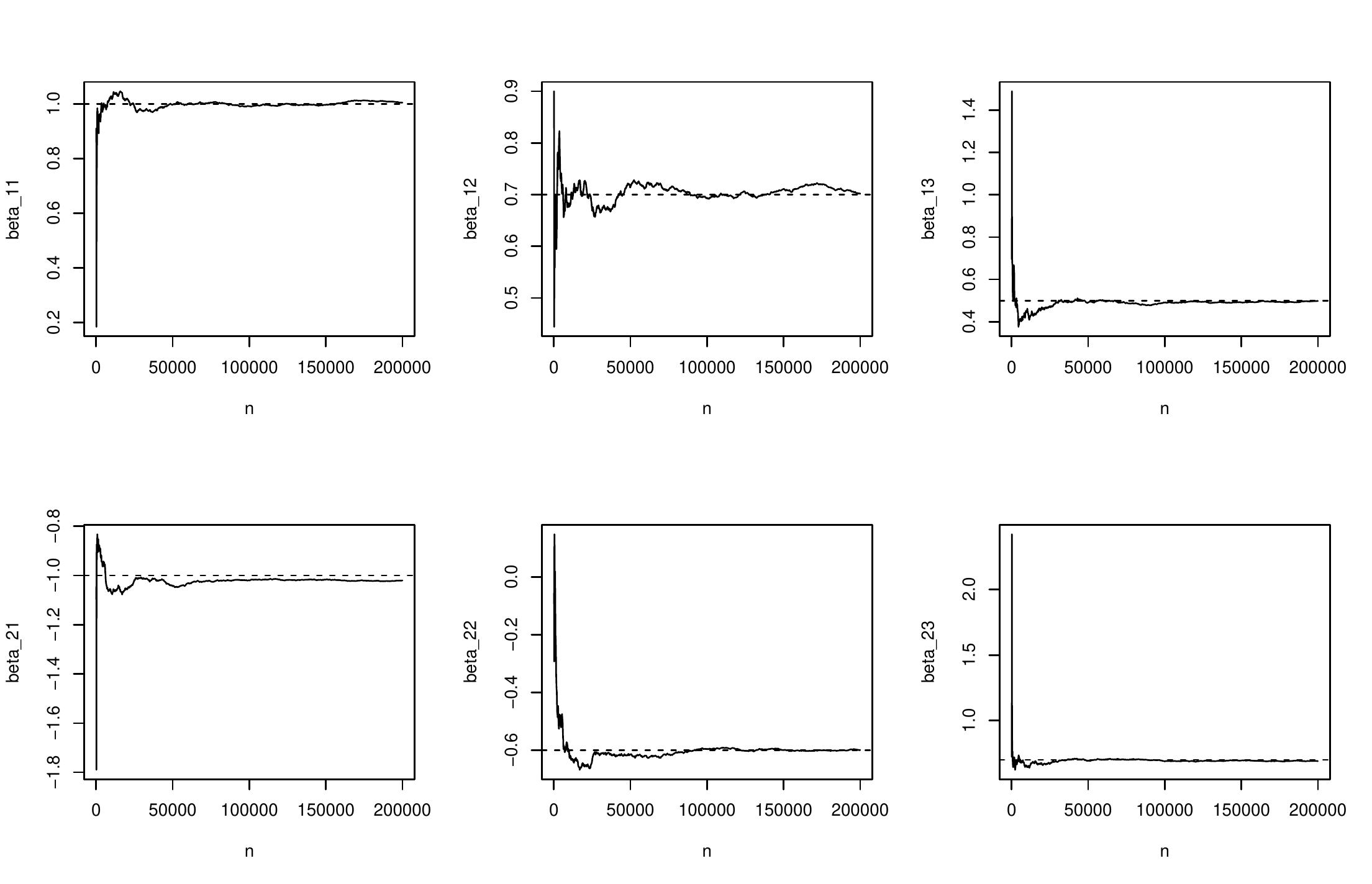}
\caption{Estimators $\hat{\beta}_{kl}$}\label{bethat}
\end{figure}

\begin{figure}[H]
\centering
\includegraphics[scale=0.4]{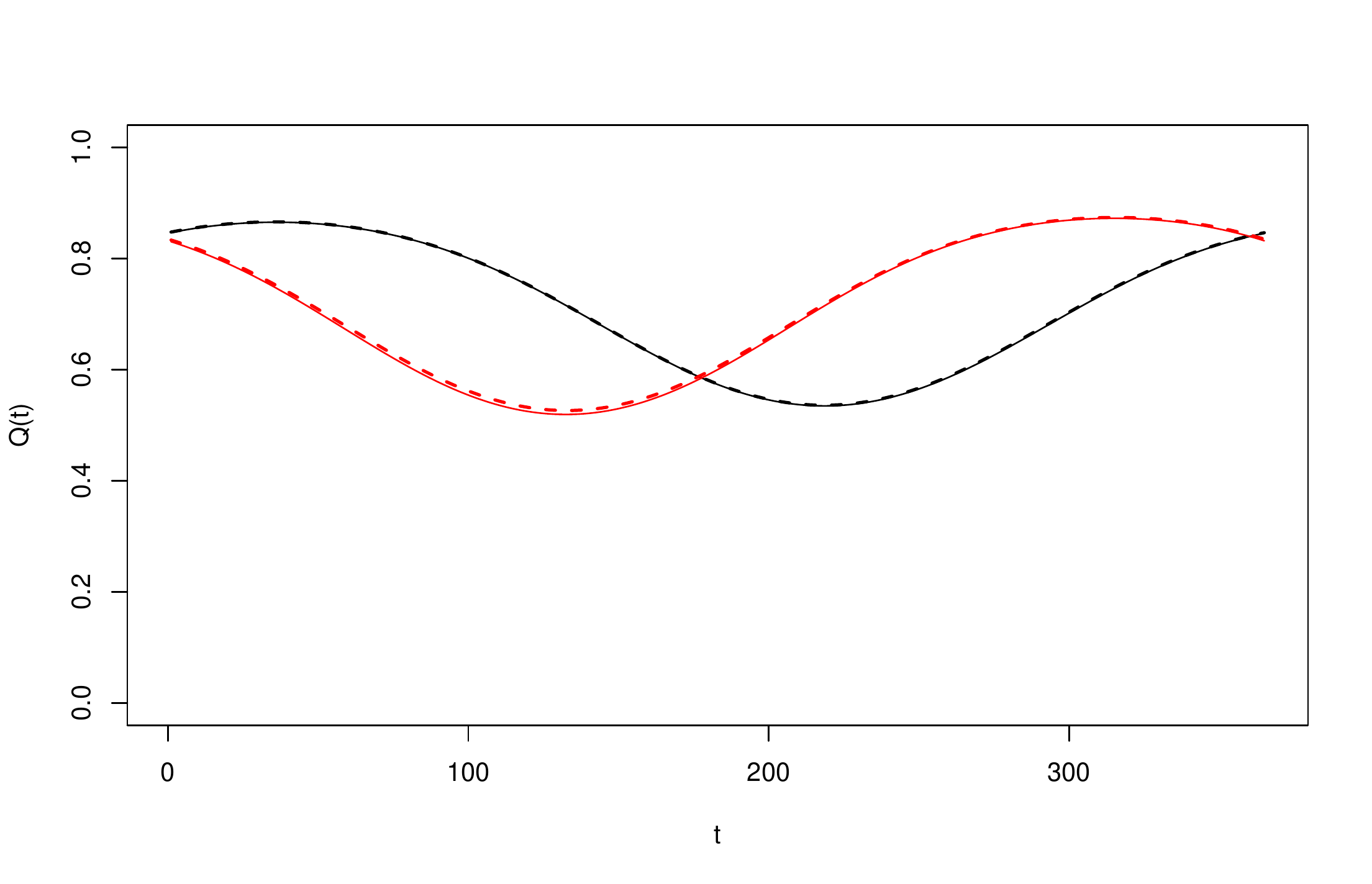}
\caption{True transitions $Q_{11}(t)$ and $Q_{22}(t)$ (solid line) and their estimated counterparts (dashed line, computed with $n=n_{\max}$)}\label{Qhat}
\end{figure}

\paragraph{Remarks}
\begin{itemize}
\item We had to swap the two states before producing these graphs. This illustrate the fact that the emission distributions and transitions matrices are identifiable only up to label swapping.
\item It is not necessary to have such a long times series ($200000$ observations) to produce good estimates. The figures above show that we can obtain good results with only $20000$ observations.
\end{itemize}

\section{Application to precipitation data}\label{apprain}

This work was motivated by the design of a \emph{stochastic weather generator}. A \emph{stochastic weather generator} \citep{katz1996} is a statistical model used whenever we need to quickly produce synthetic time series of weather variables. These series can then be used as input for physical models (e.g. electricity consumption models, hydrological models...), to study climate change, to investigate on extreme values \citep{yaoming2004}...  A good weather generator produces times series that can be considered \emph{realistic}. By realistic we mean that they mimic the behaviour of the variables they are supposed to simulate, according to various criteria. For example, a temperature generator may need to reproduce daily mean temperatures, the seasonality of the variability of the temperature, its global distribution, the distribution of the extreme values, its temporal dependence structure... and so on. The criteria that we wish to consider largely depend on applications. In this section, we introduce a univariate stochastic weather generator that focuses on rainfall.

\paragraph{Data} We use data from the \emph{European Climate Assessment and Dataset} (ECA\&D project: \url{http://www.ecad.eu}). It consists in daily rainfalls measurements (in millimeters) at the weather station of Bremen, Germany, from 1/1/1950 to 12/31/2015. We remove the 16 February 29 so that every year of the period of observation has 365 days. Thus there are 24090 data points left. Missing data are replaced by drawing at random a value among those corresponding to the same day of the year. The distribution of daily precipitation amounts naturally appears as a mixture of a mass at $0$ corresponding to dry days, and a continuous distribution with support on $\mathbb{R}_+$ corresponding to the intensity of precipitations on rainy days. Also, the data exhibits a seasonal behaviour with an annual cycle. Rainfalls tend to be less frequent and heavier in summer than in winter. Hence, a simple HMM cannot be used to model this data, but using a SHMM seems appropriate.

\paragraph{Model}

We use a model that is very similar to the one introduced in paragraph \ref{app1}. To account for dry days, for each state, we replace the first component of the mixture of exponential distributions by a Dirac mass at $0$, so that the emission distribution in state $k$ is
$$\nu_{k} = p_{k1}\delta_0 + \sum_{m=2}^Mp_{km}\mathcal{E}\left(\lambda_{km}\right).$$
Notice that the emission densities do not depend on $t$, as introducing periodic emission distributions is not necessary to generate realalistic time series of precipitations. It is enough to consider periodic transitions. Here the dominating measure is $\mu = \delta_0 + \boldsymbol{\lambda}$ where $\boldsymbol{\lambda}$ is the Lebesgue measure over $(0,+\infty)$. Hence the emission densities are given by
$$f_{k}(y) = p_{k1}\mathbbm{1}_{y=0} + \sum_{m=2}^Mp_{km}\lambda_{km}e^{-\lambda_{km}y}\mathbbm{1}_{y>0}.$$
Recall that $(p_{k1},\dots,p_{kM})$ is a vector of probability and the transition probabilities between the hidden states are given by equation \eqref{eq-Q}.

\paragraph{Results} We estimate the parameters of the model by maximum likelihood inference, using the EM algorithm described in paragraph \ref{em}.
We choose $K=4$ (four states), $M = 3$ (mixtures of two exponential distributions and a Dirac mass at $0$) and $d=2$ (complexity of the seasonal components) as these parameters give good validation results. Figure \ref{dens} displays the estimated emission densities. They correspond to the following estimators:

$$\hat{\Lambda}=\begin{pmatrix}
3.455 & 3.455 \\ 
  1.162 & 1.825 \\ 
  2.330 & 0.481 \\ 
  0.086 & 0.214 \\ 
\end{pmatrix},\quad \hat{\mathbf{p}}=\begin{pmatrix}
0.983 & 0.003 & 0.014 \\ 
  0.749 & 0.025 & 0.226 \\ 
  0.032 & 0.258 & 0.709 \\ 
  0.029 & 0.059 & 0.912 \\ 
\end{pmatrix}$$

\begin{figure}[H]
\centering
\includegraphics[scale=0.5]{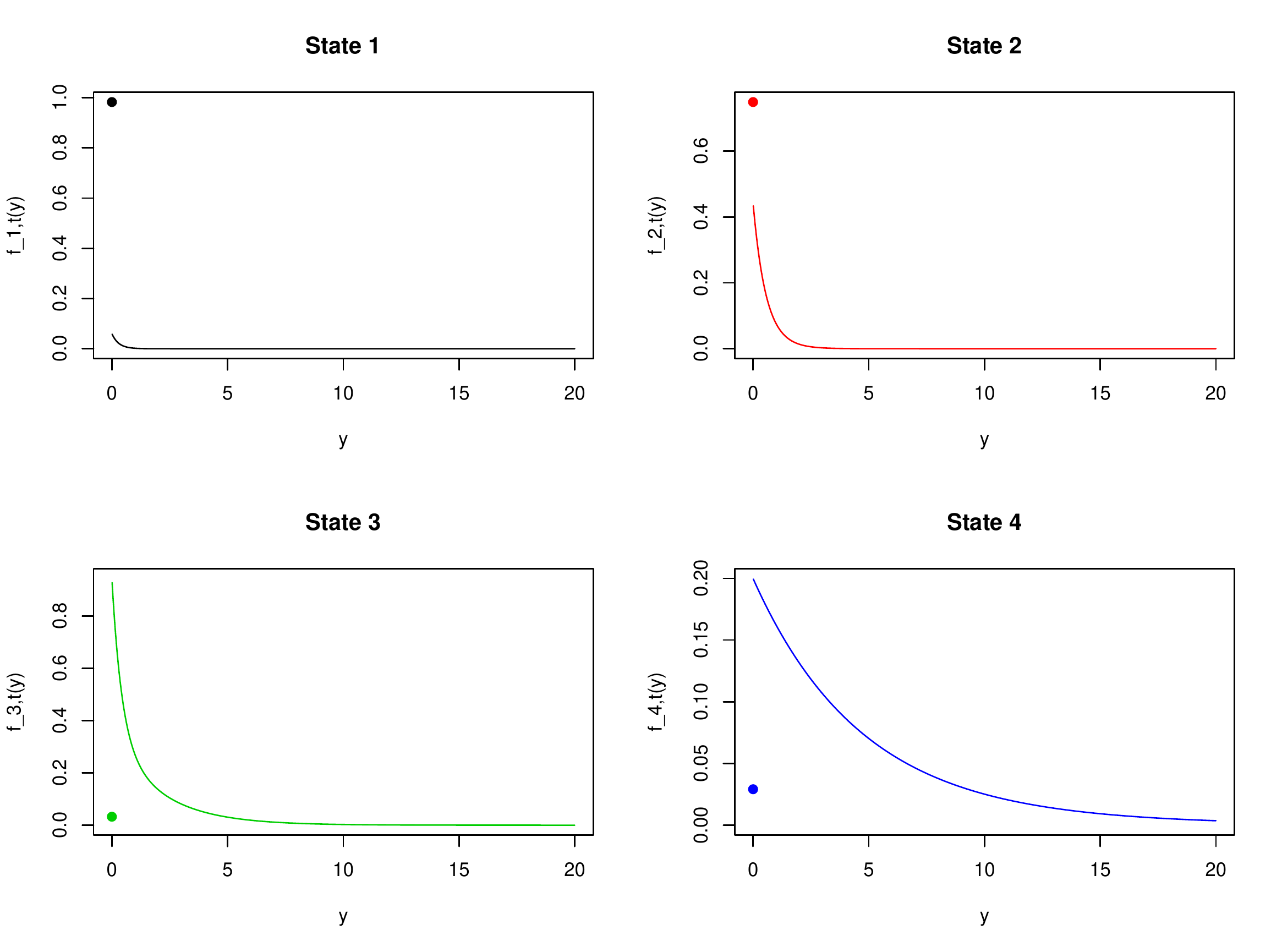}
\caption{Estimated emission densities}\label{dens}
\end{figure}

The physical interpretation of the four states is straightforward. State 1 is mostly a dry state: in this state, it rarely rains and when it does rain, the rainfalls amounts are small. On the opposite, state 4 is a rainy state, with heavy rainfalls. Between these two extremes, state 2 and state 3 are intermediate states with moderate precipitation amounts. However, they differ by their precipitations frequency, as state 2 is dry most of the time whereas state 3 is almost always rainy. Figure \ref{qhatrain} shows the transition probabilites between the four states as functions of time.

\begin{figure}[H]
\centering
\includegraphics[scale=0.5]{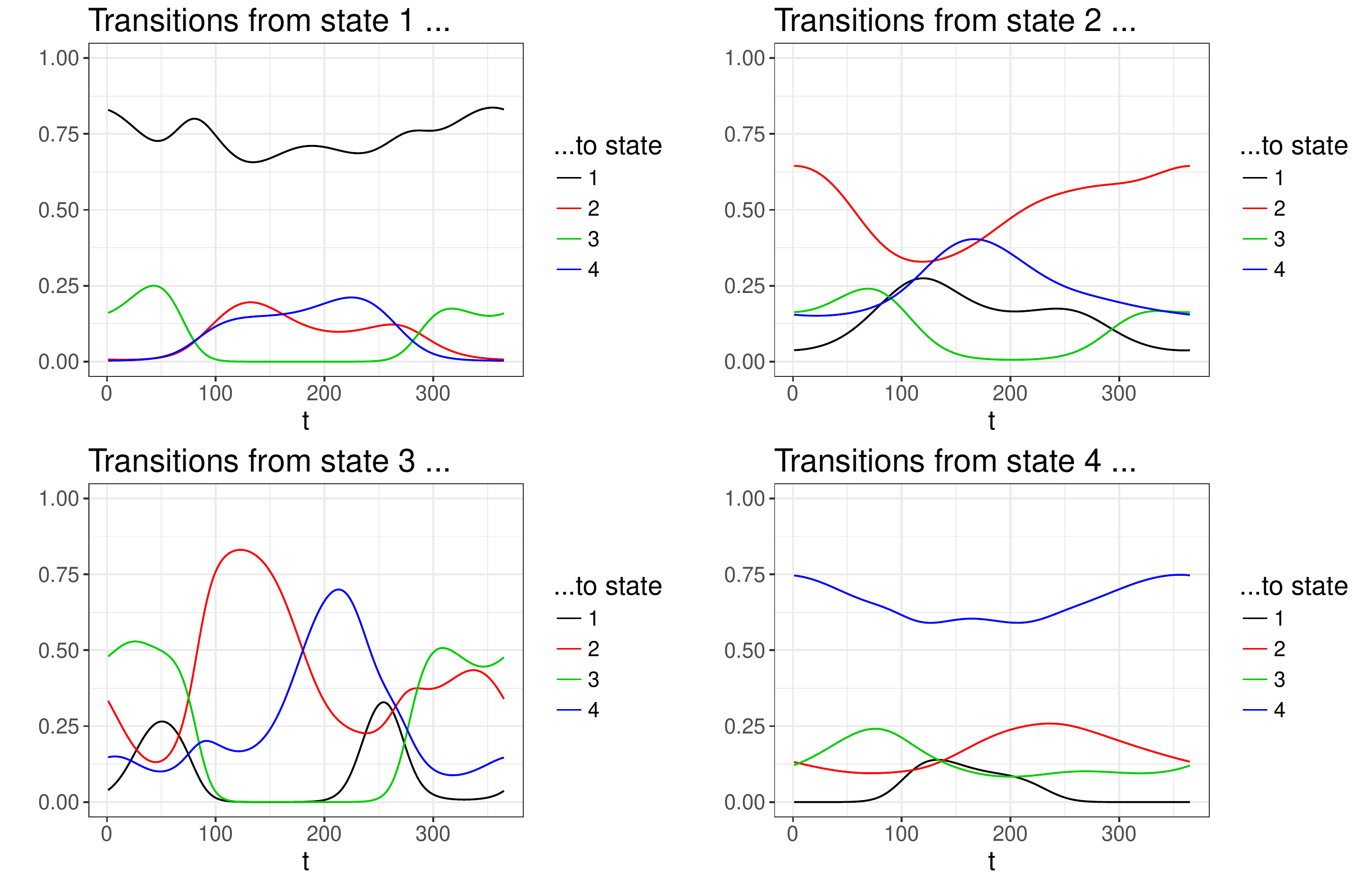}
\caption{Estimated transition probabilities}\label{qhatrain}
\end{figure}

It is also interesting to look at the relative frequencies of the four states (Figure \ref{stat}). These vary quite a lot throughout the year.

\begin{figure}[H]
\centering
\includegraphics[scale=0.4]{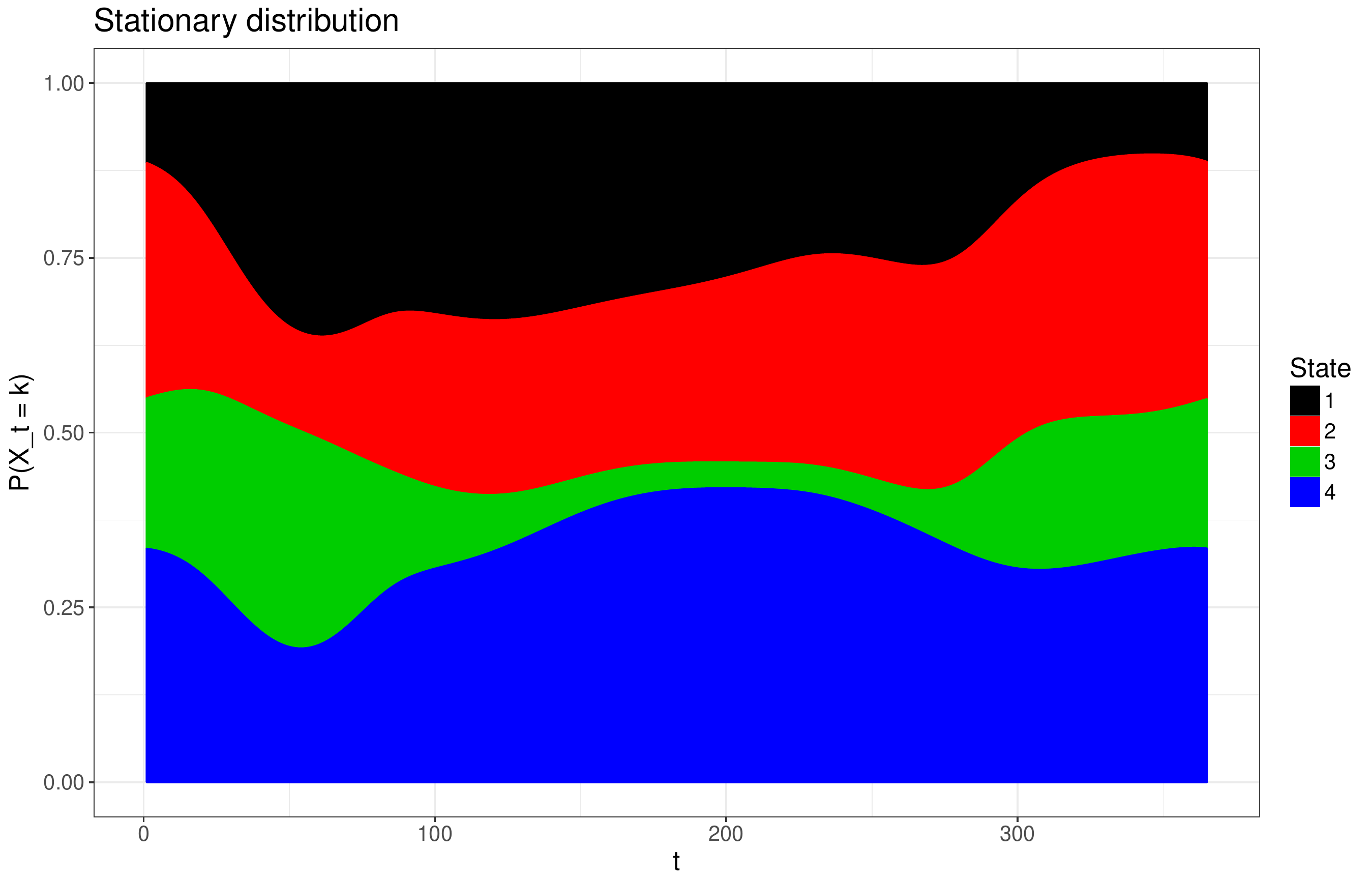}
\caption{Relative frequencies of states}\label{stat}
\end{figure}
  
In particular, we see that in summer, dry states 1 and 2 are less frequent whereas state 4 is the most visited state. It means that in summer, we observe either dry days or heavy rain. It is the opposite in winter, where rainfalls are more frequent but also lighter compared to the summer. These interpretations of the states are consistent with climatology. 

\paragraph{Validation}

As we wish to generate realistic simulations of daily rainfall amounts, i.e. simulations whose statistical properties mimic those of the real data, we evaluate the model by comparing the simulations produced by the model using the estimated parameters to the observed time series. To be specific, $1000$ independent simulations are produced, each of them having the same length as the observed series. To perform a simulation, we first simulate a Markov chain $\left(X_t^\mathrm{sim}\right)_{t\geq 1}$ with transition matrices $\hat{Q}(t)$. Then we simulate the observation process $\left(Y_t^\mathrm{sim}\right)_{t\geq 1}$ using the estimated densities $f_{X_t^\mathrm{sim},t}^{\hat{\theta}_Y}$. Several criteria can be considered to carry out the comparison: daily statistics (moments, quantiles, maxima, rainfall occurrence), overall distribution of precipitations, distribution of annual maximum, interannual variability, distribution of the length of dry and wet spells... The choice of the criteria mostly depends on the specific application of the model. Each of these statistics is computed from the simulations, which provides an approximation of the distribution of the quantity of interest under the law of the generator (in other words, we use \emph{parametric bootstrap}), hence a $95\%$ prediction interval.  Then this distribution is compared to the value of the same statistic computed using the data. Let us first compare the overall distributions of the real precipitation amounts and the simulated ones by looking at the quantile-quantile plot (see Figure \ref{qqplot}).

\begin{figure}[H]
\centering
\includegraphics[scale=0.4]{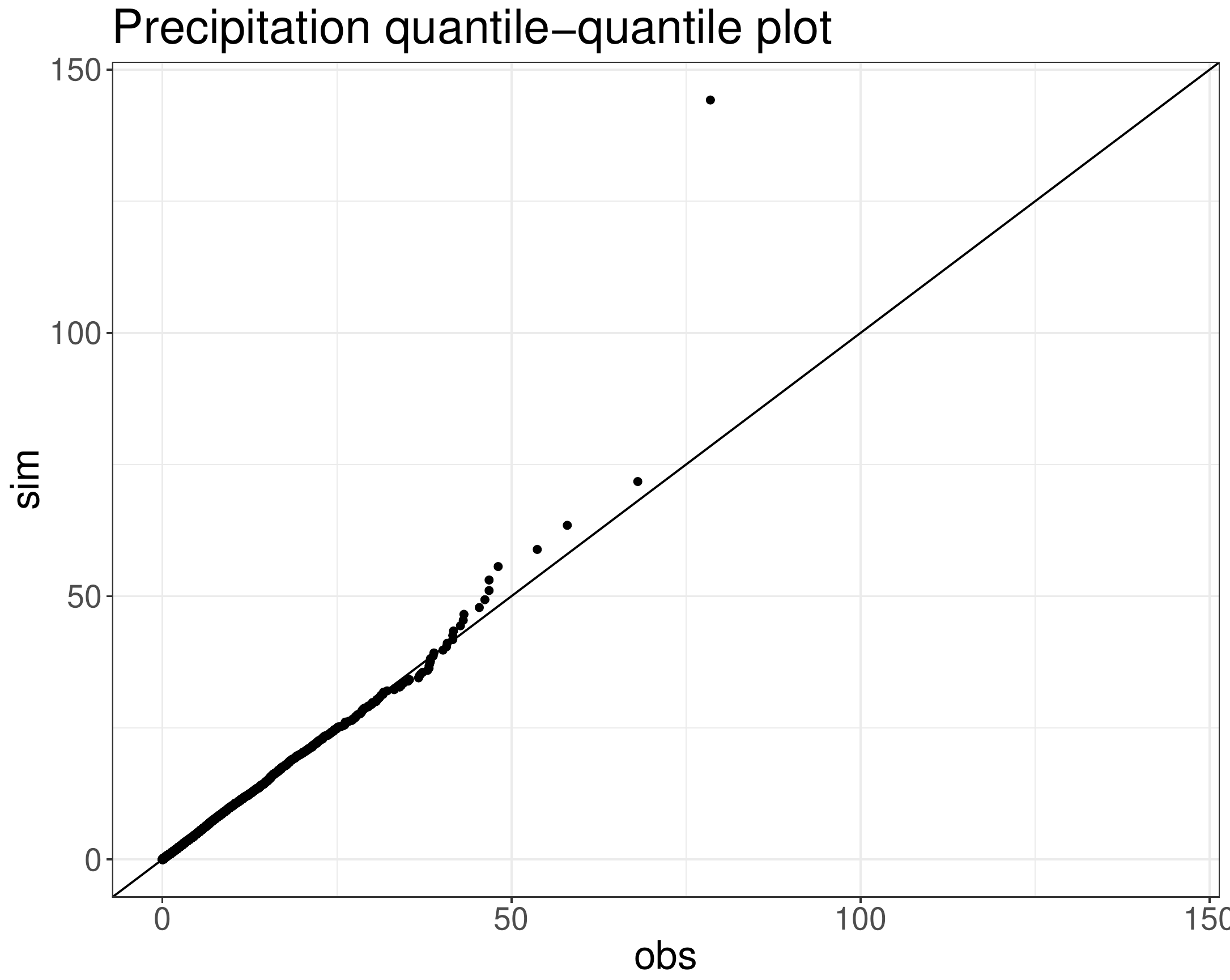}
\caption{Quantile-quantile plot}\label{qqplot}
\end{figure}

The match is correct, except in the upper tail of the distribution. The last point corresponds to the maximum of the simulated values, which is much larger than the maximum observed value. This should not be considered as a problem: a good weather generator should be able to (sometimes) generate values that are larger than those observed. We then focus on daily distributions. Figure \ref{moments} shows the results obtained for the first four daily moments and for the daily frequency of rainfall. It shows that these statistics are well reproduced by the model. Even though we did not introduce seasonal coefficients in emission densities, seasonalities appear both in the frequency of rainfall and the amounts. This is only due to the seasonality of the transition probabilities between the states.

\begin{figure}[H]
\centering
\includegraphics[scale=0.5]{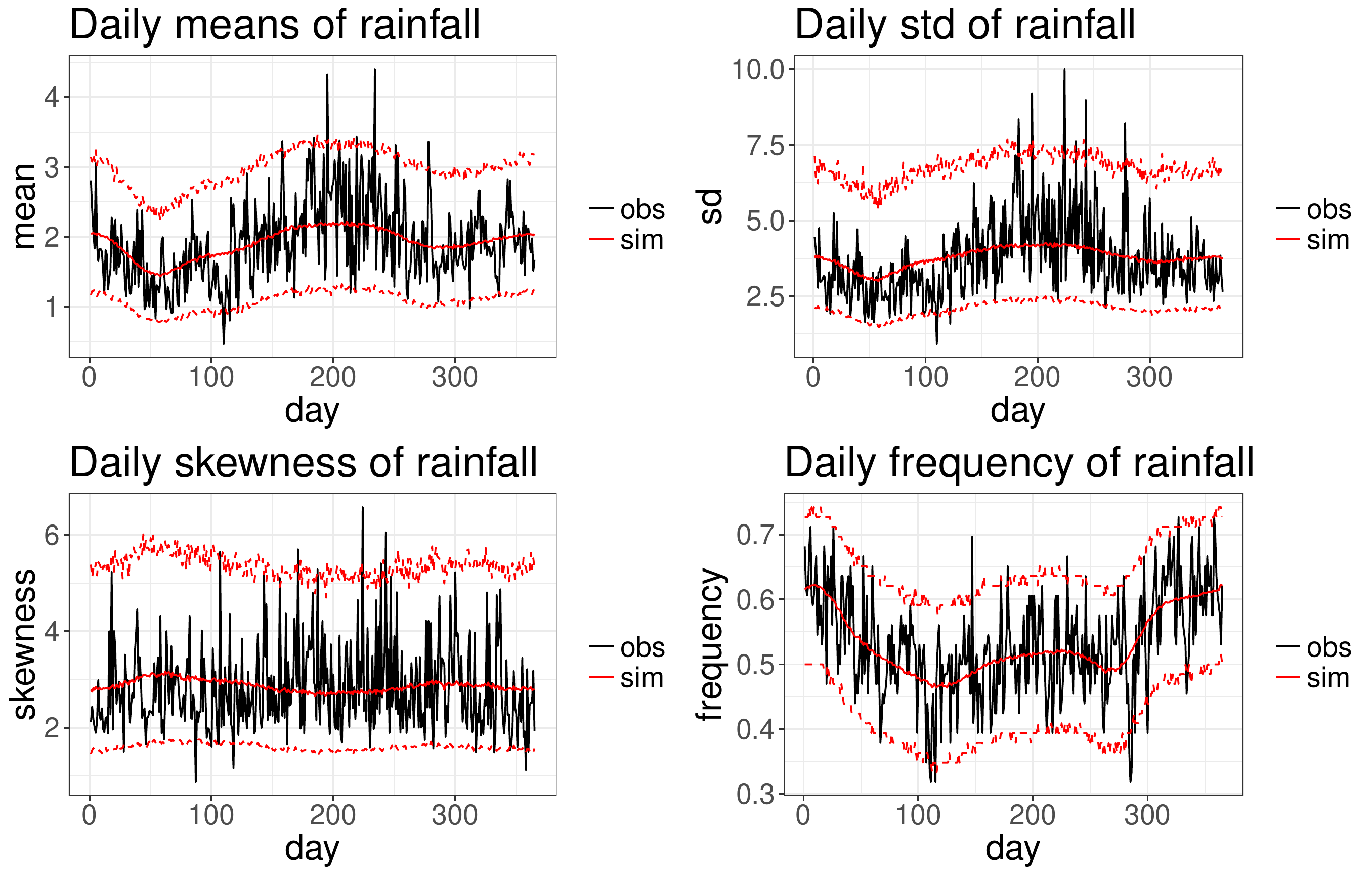}
\caption{Daily moments and frequency of precipitations. The black line relates to observations, the red solid line is the mean over all simulations, and the dashed lines depict an estimated $95\%$ prediction interval under the model.}\label{moments}
\end{figure}

The distribution of the duration of \emph{dry and wet spells} is another quantity of interest when one studies precipitation. A wet (resp. dry) spell is a set of consecutive rainy (resp. dry) days. This statistic provides a way to measure the time dependence of the occurrence process. The results are presented in Figure \ref{spells}. The dry spells are well modelled, whereas there is a slight underestimation of the frequency of 2-day wet spells while the single day events frequency is slightly overestimated.\\

\begin{figure}[H]
\centering
\includegraphics[scale=0.4]{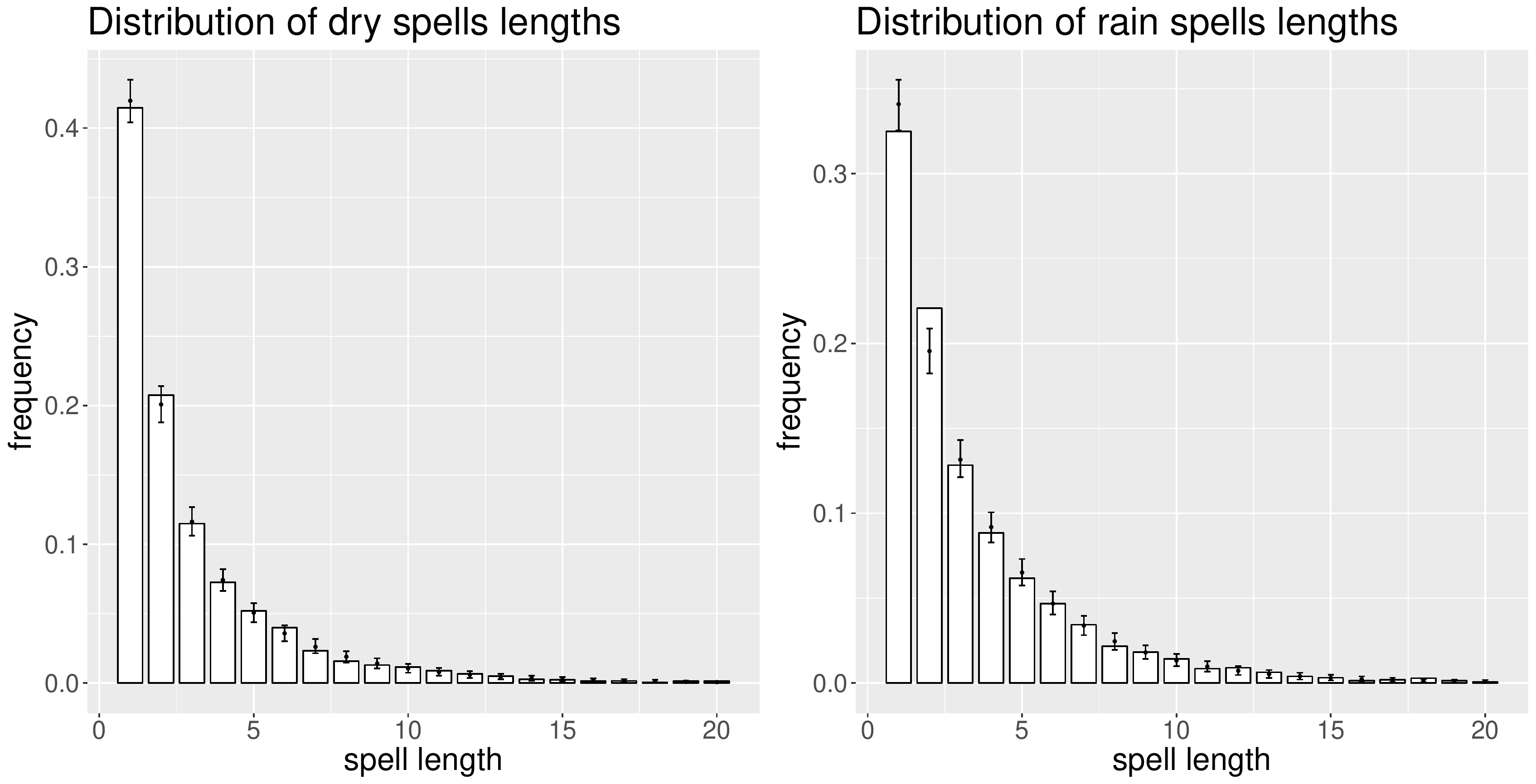}
\caption{Distribution of the lengths of dry (left plot) and wet (right plot) spells: observed (bars) versus simulated (error bars). The dots represent the means of the simulations}\label{spells}
\end{figure}

\section{Conclusion}

We introduced a variant of hidden Markov models called SHMM, adapted to data with a seasonal behaviour. In these models, the transition probabilities between the states are periodic, as well as the emission distributions. We gave sufficient conditions of identifiabiity for SHMM and we proved that under reasonable assumptions on the parameter space, the maximum likelihood estimator is strongly consistent, thus generalizing previous results on HMM. Two specific models for which those conditions are satisfied were given as examples. In order to compute the maximum likelihood estimator, we described the EM algorithm adapted to the framework of SHMM and we used it with simulated data to illustrates our consistency result. In the last section, we applied a SHMM with zero-inflated mixtures of exponential distributions as emission laws to precipitation data. We showed that such a model provides a good example of a stochastic weather generator, as the statistical properties of time series generated by the model are close to those of the observations.\\
In this paper, we considered the number of states of the hidden process as a known parameter. However, in most real world applications, it is unknown. This model selection problem has yet to be adressed. In many applications, the data exhibit trends (e.g. climate change) in addition to seasonalities. However, the techniques presented in this paper cannot be directly adapted to deal with trends, so that this case requires further investigation.

\section*{Acknowledgements}
The author would like to thank Yohann De Castro, \'Elisabeth Gassiat, Sylvain Le Corff and Luc Leh\'ericy from Universit\'e Paris-Sud for fruitful discussions and valuable suggestions. This work is supported by EDF. We are grateful to Thi-Thu-Huong Hoang and Sylvie Parey from EDF R\&D for providing this subject and for their useful advice. 

\bibliographystyle{plainnat}
\bibliography{bibli}

\end{document}